\documentclass[sigconf,nonacm]{acmart}

%% \BibTeX command to typeset BibTeX logo in the docs
\AtBeginDocument{%
  }

% correct bad hyphenation here
\hyphenation{op-tical net-works semi-conduc-tor}

%% Rights management information.  This information is sent to you
%% when you complete the rights form.  These commands have SAMPLE
%% values in them; it is your responsibility as an author to replace
%% the commands and values with those provided to you when you
%% complete the rights form.
% \setcopyright{acmlicensed}
% \copyrightyear{2025}
% \acmYear{2025}
% \acmDOI{XXXXXXX.XXXXXXX}
% %% These commands are for a PROCEEDINGS abstract or paper.
% \acmConference[ASIACCS '26]{21st ACM ASIA Conference on Computer and Communications Security}{June 01--06,
%   2026}{Bangalore, India}
% %%
% %%  Uncomment \acmBooktitle if the title of the proceedings is different
% %%  from ``Proceedings of ...''!
% %%
% %%\acmBooktitle{Woodstock '18: ACM Symposium on Neural Gaze Detection,
% %%  June 03--05, 2025, Woodstock, NY}
% \acmISBN{978-1-4503-XXXX-X/2025/06}

% % to be able to draw some self-contained figs
\usepackage{tikz}
\usepackage{graphicx}
\usepackage{subfigure}
% % \usepackage{algorithm}
% % \usepackage{algorithmic}
% \usepackage{amsthm}
% \usepackage{thmtools, thm-restate}
\usepackage{multirow}
\usepackage{array}
\usepackage{caption}
\usepackage{soul}
\usepackage{pgfplots}
\usepackage{pgfplotstable}
\usepgfplotslibrary{statistics}
\usepgfplotslibrary{groupplots}
\usepgfplotslibrary{colorbrewer}
\usepackage{url}
\usepackage{hyperref}
\usepackage[shortlabels]{enumitem}
\usepackage{xspace}
\usepackage{gensymb}

\usepackage[linesnumbered,ruled,vlined]{algorithm2e}

\usepackage{xcolor}
\usepackage{mdframed}
\definecolor{mygray}{RGB}{211,211,211} % Adjust the color as needed

\newcommand{\tool}{{\sc CLUE-Mark}\xspace}
\newcommand{\treering}{{\sc Tree Ring}\xspace}
\newcommand{\gs}{{\sc Gaussian Shading}\xspace}
\renewcommand{\vec}{\mathbf}
\newcommand{\reals}{\mathbb{R}}
\newcommand{\gaussian}{\mathcal{N}(\vec{0}, \vec{I})}
\newcommand{\keyspace}{\mathcal{K}}
\newcommand{\tokenspace}{\mathcal{T}^*}

\newcommand{\sourceref}{%
A proof-of-concept implementation of \tool{} and source
code to reproduce our evaluations is available at:
\url{https://github.com/kisp-nus/cluemark}%
% [Source code to reproduce all evaluations will be
% released prior to publication.]
}

\captionsetup[table]{skip=10pt}
\newtheorem{theorem}{Theorem}
% \numberwithin{theorem}{subsection}
\newtheorem{lemma}{Lemma}[theorem]
\newtheorem{claim}{Claim}[theorem]
\newtheorem{assumption}{Assumption}[theorem]
\theoremstyle{definition}
\newtheorem{definition}{Definition}[section]

\begin{document}

\title{\tool: Watermarking Diffusion Models using CLWE}

\author{Kareem Shehata}
\affiliation{%
  \institution{School of Computing, National University of Singapore}
  \city{Singapore}
  \country{Singapore}
}
\email{kareem@u.nus.edu}

\author{Aashish Kolluri}
\authornote{Work done while at National University of Singapore.}
\affiliation{%
  \institution{Azure Research}
  \city{Cambridge}
  \country{UK}
}
\email{aashishk@u.nus.edu}

\author{Prateek Saxena}
\affiliation{%
  \institution{School of Computing, National University of Singapore}
  \city{Singapore}
  \country{Singapore}
}
\email{dcsprs@nus.edu.sg}

% Define a shortened version of the list of authors for the page headers
\renewcommand{\shortauthors}{Shehata et al.}

\begin{abstract}

  The proliferation of AI-generated images has led to a new problem
known as ``AI Slop'': low-effort, mass-produced AI-generated images that are distributed as-is but are
difficult to distinguish from authentic images. Watermarking such content
from popular diffusion models provides a promising solution: if
social media platforms and end users can easily identify images
generated by the most common AI providers, then a very large
proportion of the AI Slop problem can be detected by end users. However, prior
watermarking techniques are heuristic and lack formal guarantees of
undetectability. Techniques that were claimed to not affect image
quality were later shown to in fact degrade quality, hampering
adoption as providers and users are averse to any reduction in
output quality.

In this work, we introduce \tool{}, a provably undetectable
watermaking scheme for diffusion models. \tool{} requires no changes
to the model being used, is computationally efficient, and guaranteed
to have no impact on model output quality. Our approach leverages the
Continuous Learning With Errors (CLWE) problem --- a
cryptographically hard lattice problem --- to embed hidden messages
in the latent noise vectors used by diffusion models. We reduce the
security of CLWE against an adversary given limited samples to the
security of \tool{}. Assuming that CLWE is secure in the limited
sample regime, image watermarks generated by \tool{} are undetectable
by {\em any} efficient adversary that does not have the secret key,
making them imperceptible to human observers and undetectable by
ad-hoc heuristics as well. \tool{} focuses primarily on
undetectability while maintaining sufficient robustness for common
non-adversarial transformations such as JPEG compression. Empirical
evaluations on state-of-the-art diffusion models confirm that \tool{}
achieves high message recovery, preserves image quality, and is
robust to minor perturbations such as JPEG compression and brightness
adjustments. \sourceref{}

\end{abstract}

% Add CCS concepts and keywords
% \begin{CCSXML}
% <ccs2012>
%  <concept>
%   <concept_id>10003752.10003809.10003635</concept_id>
%   <concept_desc>Security and privacy~Cryptography</concept_desc>
%   <concept_significance>500</concept_significance>
%  </concept>
%  <concept>
%   <concept_id>10010147.10010257.10010293.10010294</concept_id>
%   <concept_desc>Computing methodologies~Machine learning~Deep learning~Neural networks</concept_desc>
%   <concept_significance>300</concept_significance>
%  </concept>
%  <concept>
%   <concept_id>10010147.10010257.10010293.10010295</concept_id>
%   <concept_desc>Computing methodologies~Machine learning~Deep learning~Generative models</concept_desc>
%   <concept_significance>300</concept_significance>
%  </concept>
% </ccs2012>
% \end{CCSXML}

% \ccsdesc[500]{Security and privacy~Cryptography}
% \ccsdesc[300]{Computing methodologies~Machine learning~Deep learning~Neural networks}
% \ccsdesc[300]{Computing methodologies~Machine learning~Deep learning~Generative models}

% \keywords{steganography, diffusion models, lattice cryptography, provable security}

\maketitle
% \begingroup\renewcommand\thefootnote{*}
% % \footnotetext{These authors contributed equally to this work.}
% \endgroup

%-------------------------------------------------------------------------------
\section{Introduction}
\label{sec:intro}
%-------------------------------------------------------------------------------

The internet has recently witnessed an unprecedented surge of
AI-generated content, an effect that was first popularly referred to
as ``AI Slop'' in mid-2024~\cite{hern2024slop, nytimes_slop}. This
phenomenon has now entered popular discourse with coverage from diverse media
outlets including The Guardian~\cite{hern2024slop, malik2025slop,
mahdawi2025aislop}, The New York Times~\cite{nytimes_slop}, New York
Magazine Intelligencer~\cite{read2024drowningInSlop}, and John
Oliver's Last Week Tonight~\cite{oliver2025aislop}. 
As pointed out by many commentators, the problem with such content
 is not the generation of content itself, but
that for many users it is difficult to distinguish authentic and
synthetic content, leading to
problems of misinformation and user trust.
% Maybe good to avoid overtly citing political figures.
%, with even prominent political figures being accused of deliberately spreading generated and misleading videos and images~\cite{chabria2025fakevideoObama}.

Such machine-generated content is created in large volumes and shared through social media
with little to no modification. Many users feel either frustrated by
the inability to distinguish it from authetic content or overwhelmed
by the sheer volume of it, or both, with many blaming social media
platforms for doing too little to identify AI generated
content~\cite{hern2024slop, malik2025slop}. Thus, the problem of
identifying AI-generated images is both timely and important.

One proposed method to identifying AI-generated images is
watermarking, in which the image is ``marked'' in such a way that it
can be distinguished from an unmarked image. In this manner, should
publicly available AI image generation tools embed watermarks in
their output, as has been proposed by several legal
projects~\cite{ai_watermarking_act, uswatermark, pledge,
euwatermark}, it would be possible to identify AI-generated content
automatically either by social media platforms or by end-user
software. As the name and physical analogy suggests, watermarking is
typically designed to be robust to benign transformations, but is
also easily detectable by the user, producing at least some visible
artifacts or other degradation in image quality. This presents a
major barrier to adoption among both providers and users of AI image
generation tools, as they are (understandably) reluctant to degrade
the quality of the images generated by their models after spending
considerable effort to train them.

Prior approaches to watermarking AI-generated images were ad-hoc. For
example, \treering~\cite{treering} was conjectured to be undetectable
and robust, but no proof as such was given. While an important step
forward, \treering watermarks can be both detected and removed by
standard steganalysis~\cite{steg_attack}, i.e.\ averaging the
difference between watermarked and non-watermarked images. Similarly,
\gs~\cite{gaussian_shading} was proved to be undetectable, but only
under the assumption that the nonce used to generate the watermark is
not reused; otherwise the watermark is again detectable by
steganalysis. The one-time (per image) nonce assumption is hard to satisfy practically. This is more than a security concern; it shows that the
watermark affects the output of the model such that the output images
are both quantitatively and qualitatively different from the original
(a finding that we verify in Section~\ref{sec:eval-steganographic}).

Thus for both security and quality reasons, it is important that the
watermark be \emph{provably undetectable}, by which we mean that a
watermarked image must be indistinguishable from an unwatermarked
image\footnote{ Embedding a signal in an image such that it cannot be
detected but is not robust to perturbations is technically considered
\emph{steganography} rather than watermarking, but but the latter
term remains popular.}. A rigorous reduction from a cryptographically
hard problem is needed, otherwise it may turn out that a watermark
that was conjectured to be undetectable is in fact detectable by some
other means, or may affect downstream applications. By proving
undetectability, we ensure that the watermark does not compromise the
quality of the generated images and remains stealthy to detectors. In
an ideal world, such an undetectable watermark (properly:
steganography) would also be robust to benign transformations.
Unfortunately, it has been proved that a perfect watermark cannot
exist (see Appendix~\ref{sec:robust} and \cite{wm_impossibility}),
and thus we must choose between robustness and undetectability.

In this work we present \tool{}, an undetectable watermark (i.e.\
steganography) scheme for diffusion models to address the problem of
AI Slop. Given the challenges listed above and the impossibility of
perfect watermarking, we focus on using cryptographic techniques to
inject hidden information into diffusion model generated images such
that any efficient algorithm that can distinguish marked and unmarked
images can be used to break fundamental cryptographic assumptions ---
something assumed to be impossible --- while verifiers that have
access to the secret key can recover the hidden information with high
probability, despite minor purturbations such as JPEG compression.
This allows labelling of AI-generated content at scale, precise the
problem posed by AI Slop. Even if sophisticated users employing more
effort or advanced techniques can remove the hidden watermark, this
prevents low-effort users from flooding social media networks with
unmarked AI-generated content, drastically improving the scale of the
problem. By proving undetectability, we ensure that our method does
not compromise the quality of the generated images.

\paragraph{Undetectability from CLWE} Our approach modifies the noise
input to a diffusion model using a distribution based on the
Continuous Learning With Errors (CLWE) problem~\cite{clwe}. Intuitively, this
distribution is gaussian (i.e.\ normal) in all directions except for
a secret direction in which it has a signature pattern. For our \tool{} construction, we can prove that any process
that can reliably detect this hidden pattern can be used to
distinguish a CLWE distribution from a normal distribution. Assuming that this problem is cryptographically hard, we can guarantee
that our approach is undetectable by \emph{any} efficient algorithm
without the secret key, ensuring perfect image quality and
compatibility with all content creation tools. Because our technique
only modifies the input noise, it does not require any model
modifications or retraining.

\paragraph{Empirical Results} While \tool{} aims for undetectability, we
must also demonstrate its practical utility. Can our steganographic signal be
reliably recovered after image generation and common non-adversarial
transformations like JPEG compression? To answer this, we evaluate \tool{} on
commonly used diffusion models and datasets to verify undetectability,
recoverability, image quality, and robustness to non-adversarial
transformations. We show that \tool{} is undetectable by standard means, produces
images with identical quality to those without steganographic content, and the
hidden message can be reliably recovered after JPEG compression and minor
brightness adjustments. \sourceref{}

\paragraph{Comparison to PRC-based Watermarks} In work contemporary
to ours, Gunn et al \cite{prc_watermarks} propose using Pseudo-Random
Codes (PRC) to bias the latent input to a diffusion model for watermarking. 
While PRC offer a promising new primitive, our work uses CLWE---a complementary primitive to encode secret information in the {\em continuous $\mathbb{R}^d$ domain} directly---which is more natural fit for machine learning systems. PRC-based watermarks use parameters that are not known to be theoretically secure, as stated in their work, and is more suitable for applications where robustness to adversarial perturbation is more important than undetectability.
%empirically, which} 
%Both PRC-based watermarks and \tool{} choose parameters guided by practical attacks, different from those given by prior theoretical results } While PRCs are a promising new primitive, their construction and usage are untested and the parameters used are not known to be secure. In contrast, the CLWE problem is a more natural fit to machine learning and does not require the binary-biasing trick to convert pseudo-random bits into pseudo-random gaussian noise. We present \tool{} as an alternative to PRC-based watermarks, as it remains to be seen which provides better results in practice. 
We discuss in more
detail how our work differs from PRC watermarks in
Section~\ref{sec:related}.

\paragraph{Contributions.} Our contribution is a novel and practical
watermarking scheme, \tool{}, for images generated by modern
diffusion AI models. It aims to be undetectable, in the sense that
distinguishing marked and unmarked images is as hard as
distinguishing the CLWE distribution from a Gaussian. This assumption is stricter than the standard cryptographic hardness of CLWE; it considers the practically-motivated regime where the adversary has limited samples.
Along the way we develop a novel hCLWE sampling algorithm (see
Section~\ref{sec:sampling_hclwe}) that is more practical
than the previous rejection-sampling technique proposed
in~\cite{clwe}, which may be of independent interest. These tools
enables detection of low-effort AI-generated content (``AI Slop'')
without compromising image quality or compatibility with existing
tools and platforms.

%-------------------------------------------------------------------------------
\section{Motivation and Problem Setup}%
\label{sec:motivation-problem}
%-------------------------------------------------------------------------------

%-------------------------------------------------------------------------------
\begin{figure*}[!ht]
    \centering
    \includegraphics[width=0.9\textwidth,trim={0 21cm 16cm 0},clip]{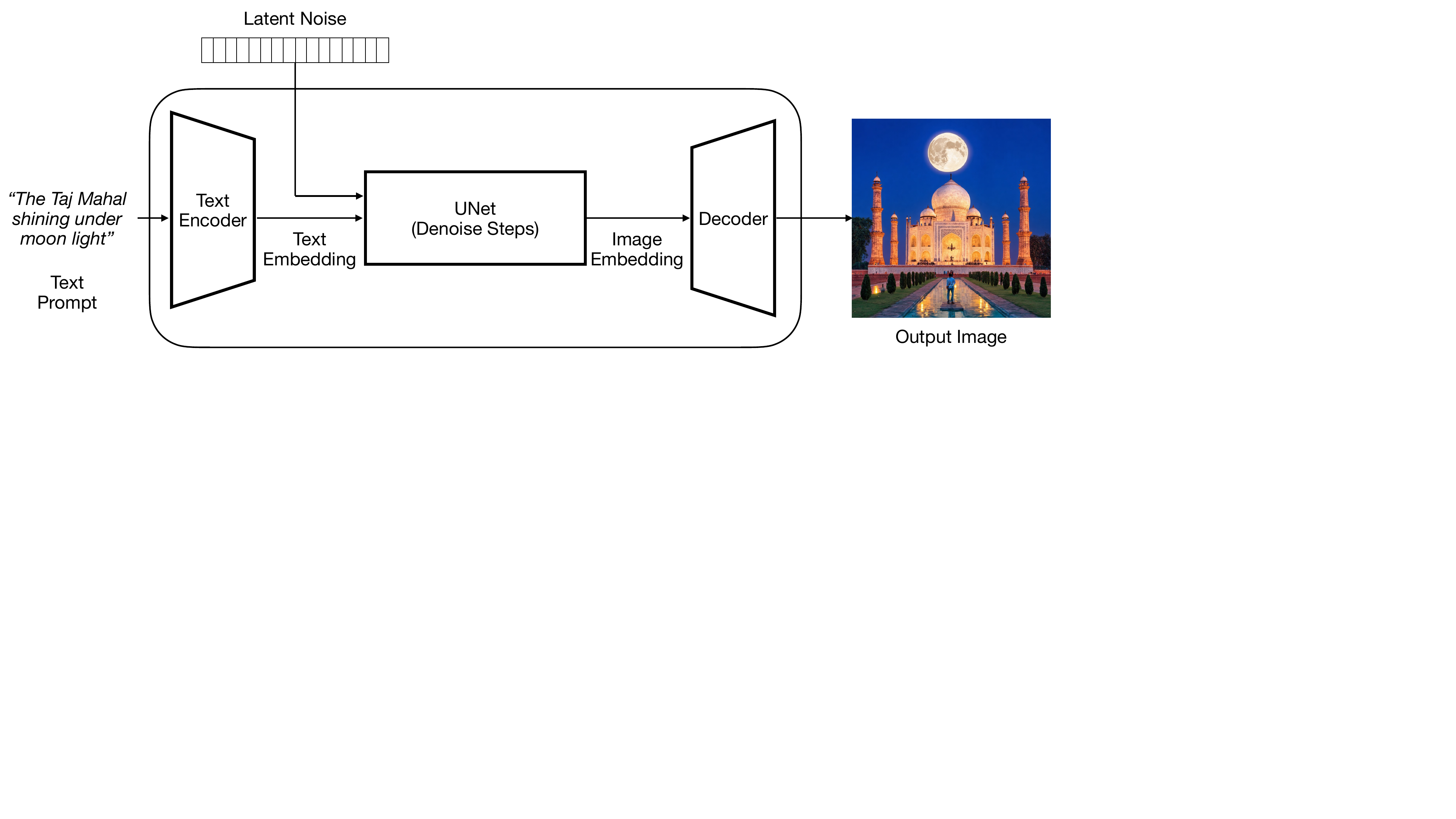}
    \caption{Sampling images from latent diffusion models}%
    \label{fig:ldm}
\end{figure*}

Diffusion models, such as Stable
Diffusion~\cite{podell2023sdxl,rombach2022high}, have quickly emerged
as the go-to AI models for generating high quality and
customizable images. They have been adopted by content creators across diverse fields,
from digital art to professional illustration for e-commerce and advertising 
(see survey~\cite{yang2023diffusion}).
The process to generate an image from a diffusion model is called
``sampling,'' in which the model is given a text embedding and a
noise vector (in the latent space of the model, hence this is often
called the ``latent vector'' or ``latent noise''), and the model
iteratively removes the noise to produce an image described by the
text embedding, as illustrated in Figure~\ref{fig:ldm}. We refer
to~\cite{rombach2022high, podell2023sdxl} for understanding the
details of diffusion models, and describe only the parts that are
necessary for our work.

\begin{definition}[Diffusion Model Sampling]
\label{def:diff_sampling}

    The process to sample a diffusion model is defined as the function:
    
    \[
        \text{Sample}_\theta(\pi \in \tokenspace; \vec{z} \in \reals^n )
        \rightarrow \vec{x} \in \reals^N
    \]
    
    where $\theta$ is the model parameters, $\pi$ is the embedding of
    the text prompt, $\vec{z}$ is the noise vector, and $\vec{x}$ is
    the generated image. The noise vector is drawn from a standard
    normal (Gaussian) distribution, that is $\vec{z} \sim \gaussian$.
    
\end{definition}

%-------------------------------------------------------------------------------

\paragraph{Watermarking Diffusion Models.} Watermarking is a proposed
measure to detect AI-generated content, which can be used to fight
against AI Slop. Watermarking an image refers to implanting a signal
within it such that the signal can later be recovered, providing
proof of the provenance of the image. In a secret key scheme, only
those with the key can verify an image's origin, which is technically
closer to steganography than watermarking, the latter usually
prioritizes robustness to perturbations over undetectability.
Consider a company that hosts a closed-source model. If the company
watermarks the images generated by the model using a secret key, they
can later verify whether an image was generated by the model by
checking for the watermark using the key. They can also give the key
to another entity to do verification, for example a social media
platform that wants to verify the provenance of images uploaded by
users. The company may also assign a key to each user in order to
trace which user was responsible for generating a particular image.
Notice that the verifier, whether the model provider, social media platform,
or another party, is given the model keys but not any per-image data,
such as text prompts, nonces, or noise vectors.
Watermarks may be applied to images after generation
(``post-process'') or during model execution (``in-process'').
Diffusion models offer another option: watermarking the noise vector
$\vec{z}$ used to generate the image (``pre-process''), as introduced
by Tree Rings~\cite{treering}. In this work we focus on secret-key
pre-process watermarking. See Figure~\ref{fig:wm_use_cases} for an
illustration of the use cases.

\begin{figure*}[t]
    \centering
    \includegraphics[width=0.8\textwidth]{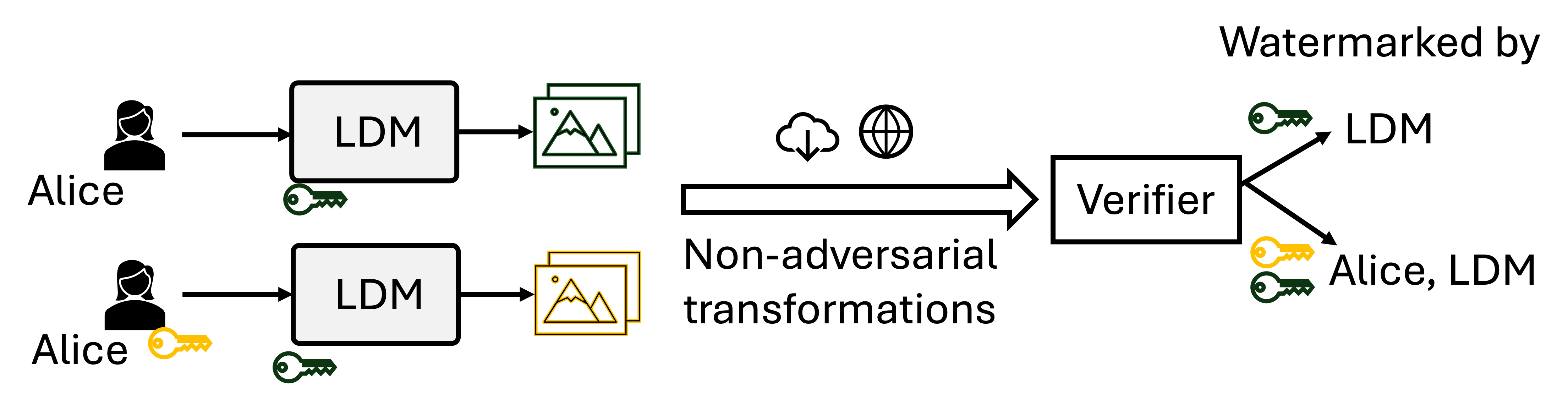}
    \caption{Watermarking use cases and our threat model, with a per-model (and per-creator) keys. Note that
    the verifier is given model keys, but no image-specific data, 
    such as text prompts, noise vectors, or per-image nonces or keys.}
    \label{fig:wm_use_cases}
\end{figure*}

\paragraph{Previous approaches} Despite the potential benefits of
watermarking AI generated images, adoption has been
limited~\cite{wsj_wm}. One reason for this is image quality. Large
models that generate high-quality images are expensive to train and
serve, and both companies and users are sensitive to any degradation
in image quality. Further, changes to image quality may also affect
the performance of downstream tasks. For example, if the generated
images are used to train another model or as inputs to a larger
process, any changes in the images may affect the performance of the
system as a whole.

Existing watermarking techniques for diffusion models have a
noticeable impact on the images generated. Post-processing
techniques, e.g.\ embedding a signal into the output pixel data and
even AI-based techniques, have visible quality
losses~\cite{treering}. As we evaluate in this work, state-of-the-art
techniques such as \treering and \gs, despite aiming for not being
perceptible, produce images with a significant quality differences
from their unwatermarked equivalents (see Section~\ref{sec:quality}).
This is because of ad-hoc design choices and the lack of provable
guarantees. When evaluated over a large set of images, it can be seen
that the watermarked images are biased (less diverse statistically)
than the unwatermarked images. Further, recent attacks have shown
that the watermarks can be detected, removed, and forged using
standard steganographic techniques~\cite{steg_attack}.

\paragraph{Problem} The question thus becomes: can we watermark
diffusion model generated images such that a verifier as in
Figure~\ref{fig:wm_use_cases} can distinguish between watermarked and
unwatermarked images, while proving that the watermark does not
affect the quality of the images? We formally define this problem in
the next section.

\subsection{Undetectable Watermarking}
\label{sec:problem}

In this work, we aim to solve this problem by designing {\em provably
undetectable} watermarking. By this we mean that distinguishing
between watermarked and unwatermarked images must be computationally
at least as hard as solving a cryptographic problem. This ensures
that there is a negligible difference between watermarked and
unwatermarked images for \emph{any} downstream application. Even a
determined adversary trying just to detect the presence of a
watermark faces a challenge at least as hard as breaking
state-of-the-art cryptography. This approach embodies a secure by
construction principle, in contrast to existing techniques which
empirically evaluate against specific attacks but cannot rule out the
existence of other effective attacks. In order to construct a
provable scheme, we first formally define the watermarking process.

\begin{definition}[Watermarking]
\label{def:watermarking}

    A watermarking function is defined as the tuple of functions
    $(\text{Setup}, \text{Mark}$, $\text{Extract})$:

    \[
        \text{Setup}_\theta(1^\lambda) \rightarrow k \in \keyspace
    \]
    \[
        \text{Mark}_\theta(\pi \in \tokenspace; k \in \keyspace)
        \rightarrow \vec{x} \in \reals^N
    \]
    \[
        \text{Extract}_\theta(\vec{x} \in \reals^N; k \in \keyspace)
        \rightarrow \{0, 1\}
    \]

    where $\lambda$ is a security parameter, $\theta$ is the model
    parameters, $\pi$ is the embedding of the text prompt, $k$ is the
    secret key, and $\vec{x}$ is the generated image.

\end{definition}

For a watermarking scheme to be useful, it needs to satisfy both
completeness and soundness criteria, for which we use
Definitions~\ref{def:wm_complete} and~\ref{def:wm_sound} below.
Notice that the definitions are worst-case over all prompts. Our
approach currently focuses on embedding a single bit of information,
though this can be extended to multiple bits through phase-shifting
techniques or using multiple keys.

\begin{definition}[Watermark Completeness]%
\label{def:wm_complete}

    A watermarking scheme $(\text{Setup}, \text{Mark}$, $\text{Extract})$ is
    $\delta$-complete if for all $\pi \in \tokenspace$:
    \[
        \Pr\left[ \text{Extract}_\theta(
            \text{Mark}_\theta(\pi; k); k) = 1
            | k \leftarrow \text{Setup}_\theta(1^\lambda)
        \right]
        \ge 1 - \delta
    \]

    Where the probability is taken over all random coins used in
    Setup, Mark, and Extract.

\end{definition}

\begin{definition}[Watermark Soundness]%
\label{def:wm_sound}

    A watermarking scheme $(\text{Setup}, \text{Mark}$, $\text{Extract})$ is
    $\delta$-sound if for all $\pi \in \tokenspace$:
    \begin{align*}
        \Pr[ \text{Extract}_\theta(
            &\text{Sample}_\theta(\vec{z}; \pi); k) = 1
            | \\
            &k \leftarrow \text{Setup}_\theta(1^\lambda),
            \vec{z} \leftarrow \gaussian]
        \le \delta
    \end{align*}

    Where the probablity is taken over $\vec{z}$ and all random
    coins used in Setup.
\end{definition}

Before we can define undetectability, we must define formally what we
mean by ``distinguishing'' and how we measure success. For this we
use Definition~\ref{def:adv}.

\begin{definition}[Advantage of a Distinguisher]%
\label{def:adv}

Let $D^{\mathcal{O}} \rightarrow \{0,1\}$ be a distinguisher with
access to oracle $\mathcal{O}$. Define the advantage of $D$ in
distinguishing $\mathcal{O}_1$ and $\mathcal{O}_2$ as:
\begin{align*}
    Adv_{D, \mathcal{O}_1, \mathcal{O}_2} =
    \left| \Pr\left[ D^{\mathcal{O}_1}(1^\lambda) = 1 \right]
      - \Pr\left[ D^{\mathcal{O}_2}(1^\lambda) = 1 \right] \right|
\end{align*}

\end{definition}

We want to ensure that our watermarking scheme is
\emph{undetectable}, that is, that the output of the watermarking
process is indistinguishable from non-watermarked images generated by
the model. For both users and providers, this guarantees that
watermarked outputs have identitical quality to unmarked outputs, and
remains compatible with all existing tools and platforms. We formally
define this in Definition~\ref{def:wm_undetectable}.

\begin{definition}[Provable Undetectability]%
\label{def:wm_undetectable}

    Let $\mathcal{M}_{\theta}$ be an oracle that on input $\pi \in
    \tokenspace$ samples $\vec{z} \leftarrow \gaussian$ and returns
    $\text{Sample}_\theta(\pi; \vec{z})$. Let $\mathcal{S}_{ \theta,
    k}$ be an oracle that on input $\pi \in \tokenspace$ returns
    $\text{Mark}_\theta(\pi; k)$ for a fixed key $k \leftarrow
    \text{Setup}_\theta(1^\lambda)$. A watermarking scheme
    $(\text{Setup}, \text{Mark}$, $\text{Extract})$ is undetectable
    if for all {\em probabilistic polynomial-time\/} PPT distinguishers
    $D$, $Adv_{D, \mathcal{M}_{\theta}, \mathcal{S}_{ \theta, k}} \le
    negl(\lambda)$.
    
\end{definition}

While robustness against adversarial manipulations is a secondary
concern in our AI Slop prevention use case, we still require the
embedded information to survive common non-adversarial
transformations that occur in normal usage, such as JPEG compression
when images are shared on social media platforms (see Figure~\ref{fig:wm_use_cases}). Our approach
prioritizes absolute undetectability first, while maintaining
sufficient robustness for practical usage.

%-------------------------------------------------------------------------------
\section{The \tool{} Approach}%
\label{sec:approach}
%-------------------------------------------------------------------------------

Our approach to steganography is based on the idea of modifying
the latent noise vector of a diffusion model (i.e.\ pre-process
steganography). Recall that a diffusion model takes as input a latent
vector that is expected to be from the standard gaussian
distribution. In our approach, we instead use a latent vector that is
drawn from the hCLWE distribution (which will be defined in
Section~\ref{sec:clwe}). This vector is indistinguishable from the
standard gaussian distribution to all PPT-computable functions
without the secret key, but can be efficiently distinguished when the
secret key is known. We call using hCLWE samples to generate an image
the {\em signal injection\/} process. To test whether an image contains
a signal, a system with access to the secret key uses the {\em
signal recovery\/} process. During recovery, we invert the diffusion
model to get an estimate of the initial latent vector from the
image. We then use the secret key to determine if the
latent vector is from the hCLWE distribution, and if so, then it is
very likely that the image contains a signal using that key.

In this section we describe the high-level approach of steganography, including
the necessary background on CLWE, and the signal injection and recovery
processes. We will prove undetectability of the construction in
Section~\ref{sec:security_proof}.

%-------------------------------------------------------------------------------
\subsection{Background on CLWE}%
\label{sec:clwe}

Consider the scatter plot shown in Figure~\ref{fig:clwe_scatter}. If
you were to look at the distribution of the points along either axis,
it would look like a standard gaussian distribution. But if you were
to plot the points along the line $y = -x$, you would notice that the
points are periodic. That is, that they're bunched together around
$-1, -1/2, 0, 1/2, 1$, and so on. In two dimensions you can see this
immediately by looking at the scatter plot, but what if the points
are in three dimensions? The points would form a sphere from nearly
any angle you look at, except if you found the ``secret direction''
you would notice that the points form slices and are again
periodic\footnote{The original CLWE presentation at
\url{https://youtu.be/9CB4KcoRB9U?t=60} has an animation that
illustrates this effect clearly.}.

This is the essence of the homogenous Continuous Learning With Errors
(hCLWE) problem\footnote{Homogenous CLWE is a special case of the
more general CLWE problem.}. Intuitively, it becomes harder to find
the secret direction as you increase the number of dimensions, as the
number of directions you have to check grows exponentially. Deciding
whether a set of points (from now on let us call them samples) are
normally distributed in all directions, or if there is a secret
direction in which they are periodic has been proven to be as hard as
solving worst-case lattice problems~\cite{clwe}.

More formally, consider a unit vector $\vec{w} \in
\reals^n$ chosen uniformly at random, and a set of samples
$\vec{y}_1, \ldots, \vec{y}_m \in \reals^n$ that are drawn from a
standard gaussian distribution in all directions except for the
direction $\vec{w}$, in which case the distribution is shown in
Figure~\ref{fig:hclwe_pdf}. This produces the well-known
``gaussian-pancakes'' distribution, which produces periodic slices
with separation $\approx 1/\gamma$ and width $\approx \beta / \gamma$
in the secret direction $\vec{w}$, and standard gaussian in all
others. The hCLWE problem is to determine the secret direction
$\vec{w}$ given a polynomial number of samples ${\left\{\vec{y}_i\right\}}_{i \in
[m]}$. The decision variant of hCLWE is to distinguish between
samples from the hCLWE distribution and from the standard
multivariate normal. This leads to Theorem~\ref{thm:hclwe}, an
informal statement of the CLWE hardness theorem; see~\cite{clwe} for
the more formal statement, proof, and further details.

\begin{theorem}[hCLWE Hardness Informal]%
\label{thm:hclwe}

    For $\gamma = \Omega (\sqrt{n}), \beta \in (0, 1)$ such that
    $\beta / \gamma$ is polynomially bounded, if there exists an
    algorithm that can efficiently distinguish the
    $\text{hCLWE}_{\gamma,\beta}$ distribution from the standard
    multivariate gaussian distribution, then there exists an
    efficient algorithm that approximates worst-case lattice problems
    to within a factor polynomial in $n$.

\end{theorem}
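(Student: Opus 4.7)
The plan is to establish hardness through the classical reduction chain: worst-case lattice problems $\to$ LWE $\to$ CLWE $\to$ hCLWE. Concretely, I would first invoke Regev's quantum reduction to show that standard decision LWE with modulus $q$, error rate $\alpha$, and dimension $n$ is at least as hard as approximating GapSVP or SIVP to within a factor polynomial in $n/\alpha$. This step is imported wholesale; no new work is required here.

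Second, I would reduce decision LWE to decision CLWE, i.e.\ the problem of distinguishing samples $(\vec{y}, z)$ where $\vec{y} \sim \gaussian$ and $z = \gamma \langle \vec{y}, \vec{w} \rangle + e \bmod 1$ (for a secret unit vector $\vec{w}$ and $e \sim \mathcal{N}(0, \beta^2)$) from samples where $z$ is replaced by a uniform draw on $[0,1)$. The idea is to embed each discrete LWE pair $(\vec{a}, b) \in \mathbb{Z}_q^n \times \mathbb{Z}_q$ into a continuous sample by Gaussian-perturbing $\vec{a}/q$ in directions orthogonal to the secret and rescaling $b/q$ to serve as $z$. The parameter $\gamma = \Omega(\sqrt{n})$ arises from the geometry of the embedding, and $\beta/\gamma$ polynomially bounded tracks the LWE error parameter $\alpha$.

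Third, I would connect decision CLWE to the hCLWE-vs-Gaussian decision problem via conditioning on $z = 0$. The marginal distribution of $\vec{y}$ under a CLWE sample, conditioned on $z$ lying in a narrow band around $0$, is exactly the gaussian-pancakes distribution of Figure~\ref{fig:hclwe_pdf}: standard Gaussian in the $(n-1)$ directions orthogonal to $\vec{w}$ and a periodic comb of Gaussians along $\vec{w}$ with spacing $\approx 1/\gamma$ and width $\approx \beta/\gamma$. Thus a distinguisher for hCLWE immediately yields one for decision CLWE by rejection-sampling on $z$, so contrapositively the hardness of decision CLWE transfers to hCLWE.

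\textbf{Main Obstacle.} The hard step is the second one --- turning a discrete LWE instance into a continuous CLWE instance without losing the modular/periodic structure that CLWE detectors exploit. The right tool is Fourier analysis: one shows that the characteristic function of the gaussian-pancakes distribution has non-trivial mass exactly at frequencies corresponding to integer multiples of $\gamma \vec{w}$, mirroring the modular arithmetic of $b = \langle \vec{a}, \vec{s}\rangle \bmod q$ in LWE. The delicate part is tuning $\gamma$ and $\beta$ so that (i) the pancake spacing is tight enough that a CLWE distinguisher sees a detectable signal, yet (ii) the noise width $\beta/\gamma$ is large enough that discretization errors from the LWE-to-CLWE embedding are absorbed. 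For the informal statement given here, I would simply cite the formal reduction of~\cite{clwe}, which carries out this Fourier-analytic bookkeeping and verifies that the parameter regime $\gamma = \Omega(\sqrt{n})$ with $\beta/\gamma$ polynomially bounded is both necessary for the reduction to go through and sufficient for the worst-case lattice hardness to transfer.
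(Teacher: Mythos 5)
The paper does not actually prove this theorem: it is explicitly an informal restatement of the main hardness result of~\cite{clwe}, and the text simply points the reader to that work for the formal statement and proof. Your decision to defer the formal argument to the cited reduction therefore matches the paper's own treatment, and your sketch of the surrounding structure is broadly sound. One substantive correction: you present the hard middle step as a classical embedding of discrete LWE into CLWE, but the reduction in~\cite{clwe} is not of that form --- it is a \emph{direct quantum reduction} from worst-case lattice problems (GapSVP/SIVP) to CLWE in the style of Regev's iterative quantum reduction, where the Fourier-analytic bookkeeping you mention is used to show that CLWE samples allow the quantum algorithm to carry out the discrete-Gaussian-narrowing step. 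The classical route through decision LWE that you outline is the later reduction of~\cite{clwe_hardness}, which the paper cites separately and which comes with different parameter trade-offs ($\gamma = \tilde{\Omega}(\sqrt{n})$ classically, or smaller $\gamma$ only under subexponential LWE hardness); conflating the two matters here because the theorem as stated, with $\gamma = \Omega(\sqrt{n})$ and a polynomial approximation factor, is the quantum result. Your final step --- obtaining hCLWE from CLWE by conditioning or rejection-sampling on $z$ near $0$, which collapses the joint distribution to the gaussian-pancakes marginal with spacing $\approx 1/\gamma$ and width $\approx \beta/\gamma$ --- is correct and is indeed how~\cite{clwe} handles the homogeneous variant. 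In short, the proposal is an acceptable plan for an imported theorem, but it silently swaps the quantum worst-case reduction for the classical LWE-based one; both yield the same informal conclusion yet are distinct proofs with distinct hypotheses.
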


Such lattice problems are considered cryptographically hard for
appropriately chosen parameters $\gamma$ and $\beta$. Assuming that
no PPT algorithm can solve them, then we can state that no PPT
algorithm can solve the decision hCLWE problem with non-negligible
probability. Assumption~\ref{def:hclwe_assumption} given later in Section~\ref{sec:formal_defs} is the more precise version we rely on. It constraints the number of samples available.

\begin{figure}
\centering
\includegraphics[width=0.9\columnwidth]{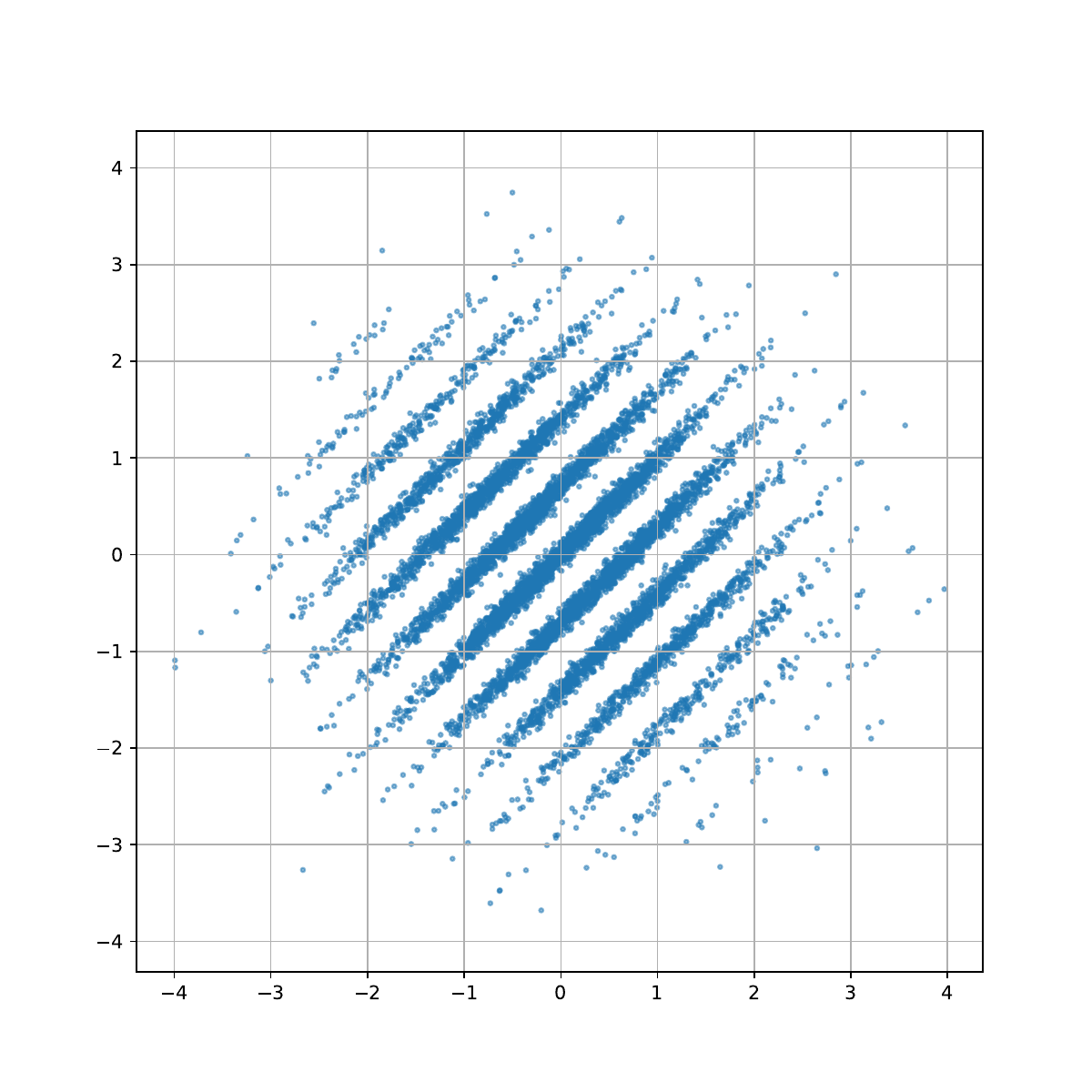}
\caption{Example scatter plot of $10,000$ samples from the hCLWE distribution
with secret direction $(1/\sqrt{2}, -1/\sqrt{2})$, $\gamma = 2, \beta=0.1$.}%
\label{fig:clwe_scatter}
\end{figure}

\begin{figure}
\centering
\includegraphics[width=\columnwidth]{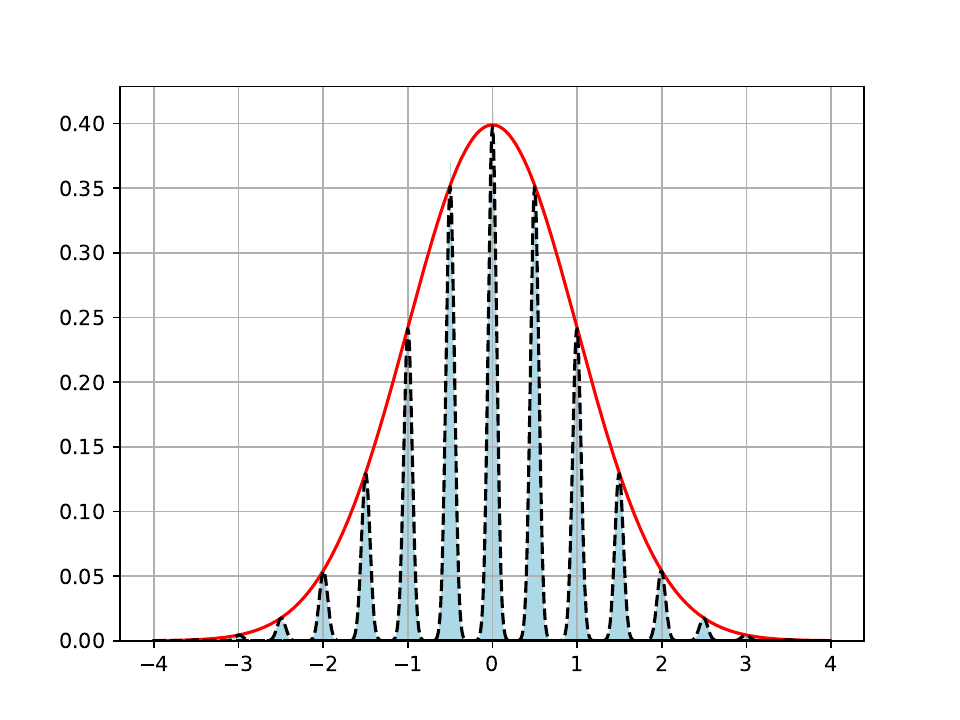}
\caption{The hCLWE PDF with $\gamma=2, \beta=0.1$ in the secret direction
rescaled for comparison with a standard gaussian.}%
\label{fig:hclwe_pdf}
\end{figure}

%-------------------------------------------------------------------------------
\subsection{Creating Watermarks from CLWE}%
\label{sec:clwe_wm}

\begin{figure*}[ht]
\centering
\includegraphics[width=0.9\textwidth,trim={0 23cm 8cm 0},clip]{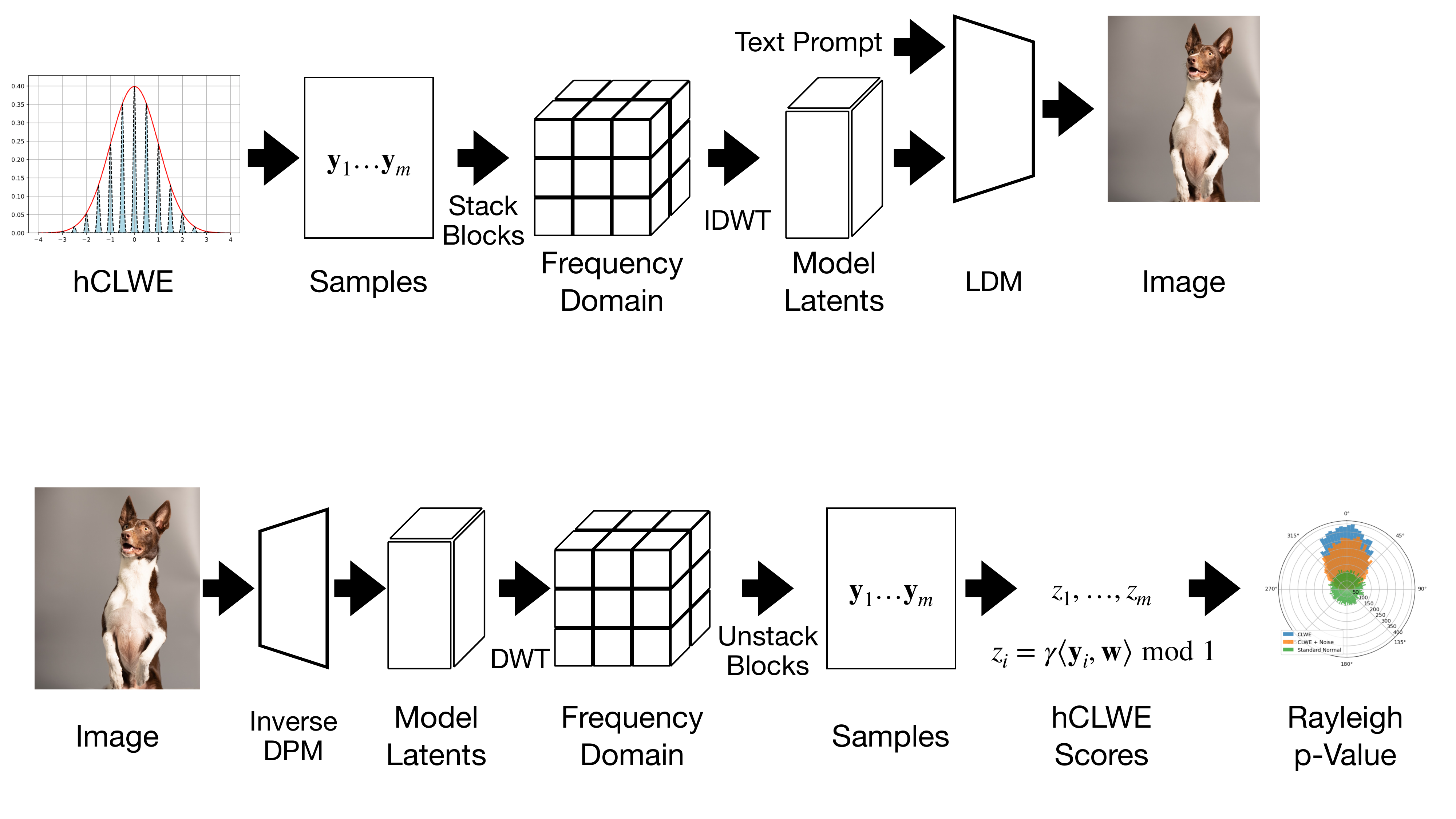}
\caption{\tool{} signal injection process showing how the latent vector is divided into blocks and transformed using the inverse discrete wavelet transform.}%
\label{fig:steganography}
\end{figure*}

There are several technical challenges we overcome in making CLWE work reliably for diffusion models, which are explained next. 

\paragraph{Blocking.} In order to use samples from the hCLWE
distribution as inputs to a diffusion model, we must first overcome a
small disparity: The hCLWE distribution assumes a number of
high-dimensional samples, while the noise vector required for the
initial diffusion model latents is a single vector. The solution is
to divide the latent vector into a number of non-overlapping blocks,
each of which is treated as a sample from the hCLWE distribution. The
blocks {\em must each be large enough to ensure undetectability, but small
enough to ensure enough samples to recover the signal} even if some
blocks are lost or corrupted by noise. As we will see in
Section~\ref{sec:practical-parameters}, for a latent vector of
dimensions $4 \times 64 \times 64$ we choose a block size of $2
\times 4 \times 4$ for our evaluations, but for the moment we leave
this as an implementation parameter.

\paragraph{Frequency Domain.} Using hCLWE samples directly does not give robust recovery of the secret embedded. In order to improve robustness to
perturbations and noise during recovery, we use a standard technique in image watermarking
(see~\cite{al2007combined}): we apply the hCLWE samples in the
frequency domain of the latents, i.e., we take the inverse discrete
wavelet transform (IDWT) of the hCLWE samples to arrive at the latent
vector used to generate the image. The process for image generation
is shown in Figure~\ref{fig:steganography}. The signal recovery
process reverses this process, but requires additional steps in order
to extract the signal in the presence of noise and perturbations.

\paragraph{Recovery.} We now have an approach that allows us to embed
an hCLWE distribution in the latent space of a diffusion model, which
we then use to produce an image. The question becomes: how do we
determine if an image contains a signal? From recent
work~\cite{hong2024exact,song2020denoising}, we can invert the
diffusion model to get an estimate of the original latent vector
(assuming the image is not perturbed)---even without the original
prompt or embedding vector used to create the image, albeit with some
noise. Dividing the latent vector into blocks is then trivial, but
the resulting blocks are subject to noise and perturbation. Given
these noisy blocks and the secret direction of the hCLWE
distribution, how do we determine if the secret signal is present?

\begin{figure*}[ht]
\centering
\includegraphics[width=0.9\textwidth,trim={1cm 2cm 1cm 22cm},clip]{figures/blocking.pdf}
\caption{Signal recovery process showing how the latent vector is extracted from the image, divided into blocks, and analyzed using the secret direction.}%
\label{fig:blocking}
\end{figure*}

First, we can measure the error from the ideal hCLWE distribution by
computing the inner product with the secret direction to get a set
of $z_i$ values:

\[
    z_i = \gamma \langle \vec{y}_i, \vec{w} \rangle \bmod 1
\]

In the no-signal case, we expect that for well-chosen parameters
the $z_i$ values will be uniformly distributed in $[0, 1)$. In the
signal case, we expect that the $z_i$ values will be concentrated
near zero, with the strength of the concentration depending on
$\beta$, noise, and perturbations. This produces the situation shown
in Figure~\ref{fig:clwe_error_rose}, which shows the result of a
statistical simulation comparing $z_i$ values calculated from samples
drawn from a standard normal, hCLWE, and hCLWE with noise from a
normal distribution of width $0.2$.

\begin{figure}
\centering
\includegraphics[width=0.9 \columnwidth, trim={1cm 1cm 1cm 1cm}]{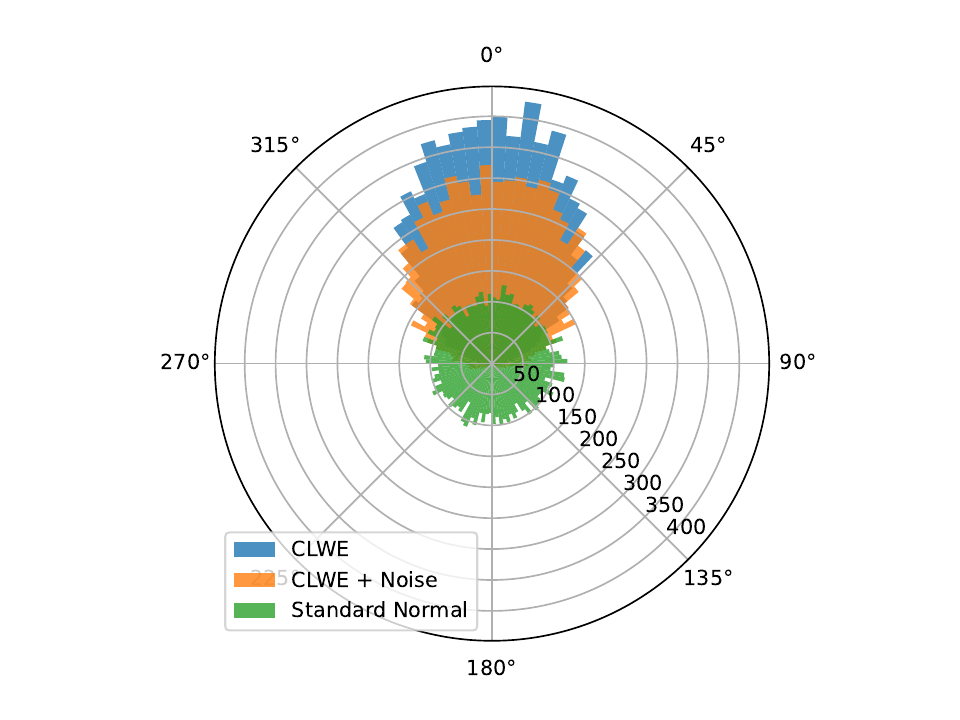}
\caption{Rose diagram of $z_i$ values from statistical simulation of
$10,000$ samples showing the distribution of errors for different
cases, where a $z$ score of $1.0$ is mapped to $360\degree$.}%
\label{fig:clwe_error_rose}
\end{figure}

We now need to decide between two hypotheseses: the null hypothesis
is that the samples are from a standard normal distribution, and thus
the $z_i$ values are uniformly distributed in $[0, 1)$, while the
alternative hypothesis is that the samples are from the hCLWE
distribution, and thus the $z_i$ values have some a concentration
around some (unimodal) peak. Fortunately, the Rayleigh
test~\cite{rayleigh_test} is a well-known test in circular statistics
that provides the optimal ability to distinguish between these two
cases, and provides a $p$-value that indicates the probability that
the null hypothesis is true given the observed data. Thus, if the
$p$-value is below a certain threshold, we can reject the null
hypothesis and conclude that the samples are from the hCLWE
distribution, and thus the image contains a signal. For example, in
the statistical simulation shown in Figure~\ref{fig:clwe_error_rose},
if we restrict to the first 256 samples (as we would expect in our
use case), the Rayleigh test gives a $p$-value of $36.5\%$ for the
standard normal samples, $1.2 \times 10^{-94}$ for the hCLWE samples,
and $5.2 \times 10^{-69}$ for the hCLWE samples with noise.

The resulting recovery process is shown in Figure~\ref{fig:blocking}.
In terms of parameters, we set $n$ (the number of dimensions per CLWE
sample), $\gamma$ and $\beta$ (CLWE parameters) for security of the
reductions in Section~\ref{sec:practical-parameters}, and $m$ is
the size/dimensions of the latent vector divided by $n$.

%-------------------------------------------------------------------------------
\section{Security Proof}%
\label{sec:security_proof}
%-------------------------------------------------------------------------------

In this section, we provide the formal definitions and proofs of completeness,
soundness, and undetectability of \tool{}.

%-------------------------------------------------------------------------------
\subsection{Formal Definitions}%
\label{sec:formal_defs}

Define:

\begin{equation}
    \rho_s(\vec{x}) = \exp\left(-\pi \| \vec{x} / s \|^2 \right)
\end{equation}

Where $\vec{x} \in \reals^n$. Note that $\rho_s(\vec{x})/s^n$ is the
PDF of the gaussian distribution with covariance $s^2/2\pi \cdot
I_n$.

\begin{definition}[hCLWE Distribution]%
\label{def:hclwe_dist}

    Let $\vec{w} \in \reals^n$ be a unit vector, and $\beta,
    \gamma > 0$ be parameters. Define the homogenous continuous
    learning with errors (hCLWE) distribution
    $H_{\vec{w},\beta,\gamma}$ over $\reals^n$ to have density at
    point $\vec{y}$ proportional to:

    \[
        \rho(\vec{y}) \cdot
        \sum_{k \in \mathbb{Z}} \rho_\beta (k - \gamma \langle \vec{w}, \vec{y} \rangle)
    \]

\end{definition}

\begin{definition}[Decision hCLWE Problem]%
\label{def:hclwe}

Let $\vec{w} \in \reals^n$ be a unit vector drawn uniformly at
random, and $\beta, \gamma > 0$ be parameters. Let $\mathcal{N}$ be
an oracle that responds to queries with samples from $\mathcal{N}(0,
I_n)$, and let $\mathcal{H}_{\vec{w},\beta,\gamma}$ be an oracle that
responds with samples from $H_{\vec{w},\beta,\gamma}$. The decision
hCLWE problem is to distinguish $\mathcal{N}$ and
$\mathcal{H}_{\vec{w},\beta,\gamma}$. That is for distinguisher $D$,
we say that $D$ solves the decision hCLWE problem with advantage
$Adv_{D, \mathcal{N}, \mathcal{H}_{\vec{w},\beta,\gamma}}$.

\end{definition}

\begin{assumption}[hCLWE]%
\label{def:hclwe_assumption}

No PPT distinguisher $D$ can solve the decision hCLWE problem with
non-negligible advantage when given a set of $m$ samples, where $m$
is at most polynomial in $n$.

\end{assumption}

The above assumption is stricter than the standard CLWE hardness, as it considers an adversary with a limited number of samples.  

%-------------------------------------------------------------------------------
\subsection{Undetectability}%
\label{sec:undetectability_proof}

Theorem~\ref{thm:stego_undetectability} gives our main theoretical
result: detecting the watermark is as hard as solving the decision
hCLWE problem. Before we dive into the technical details of the
proof, we first provide an intuitive explanation. Our proof is by
reduction, that is if there exists a distinguisher that can reliably
distinguish watermarked and unwatermarked images, then we can use
that distinguisher to solve the decision hCLWE problem. Recall that
in the hCLWE problem the distinguisher is given an oracle that for
all queries returns samples from either the hCLWE distribution or the
standard normal distribution. In the reduction, whenever the
watermark distinguisher requests an image, we construct the latent
vector from samples from the given oracle, which we give to the
latent diffusion model to generate an image, before passing the image back to
the watermark distinguisher. If the oracle gives us standard normal
samples, then this corresponds to the unwatermarked case, while if
the oracle gives us hCLWE samples, then this corresponds to the
watermarked case. Other than the samples used to build the initial
latent vector, the process is identical and thus the difference in
probability between the two cases can only come from the samples.
Note that if the latent vector has dimension $mn$, then each image
generated requires $m$ samples from the oracle. If the watermark
distinguisher has a non-negligible advantage, then so does our hCLWE
distinguisher.

\begin{theorem}%
\label{thm:stego_undetectability}

Assuming the hCLWE assumption holds for $n$ dimensions with
parameters $\gamma, \beta$, the \tool{} scheme described
in Section~\ref{sec:approach} with latent vectors of dimension $mn$
such that $m$ is at most polynomial in $n$ is undetectable by
Definition~\ref{def:wm_undetectable}.

\end{theorem}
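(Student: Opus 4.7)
The plan is to prove the theorem by a black-box reduction: I will show that any PPT distinguisher $D$ that violates undetectability (Definition~\ref{def:wm_undetectable}) with advantage $\epsilon$ can be turned into a PPT distinguisher $D'$ against the decision hCLWE problem (Definition~\ref{def:hclwe}) with the same advantage $\epsilon$, contradicting Assumption~\ref{def:hclwe_assumption}.

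First I would construct $D'$. It is given oracle access to either $\mathcal{N}$ or $\mathcal{H}_{\vec{w},\beta,\gamma}$ for a uniformly random unit vector $\vec{w}$, which will play the role of the \tool{} secret key. $D'$ internally simulates the oracle that $D$ expects: for every prompt query $\pi \in \tokenspace$ that $D$ makes, $D'$ draws $m$ fresh $n$-dimensional samples $\vec{y}_1,\dots,\vec{y}_m$ from its own oracle, arranges them as the $m$ blocks of a length-$mn$ pre-latent, applies the inverse discrete wavelet transform exactly as in the signal-injection procedure of Section~\ref{sec:clwe_wm} to obtain $\vec{z}$, calls $\text{Sample}_\theta(\pi; \vec{z})$, and returns the image to $D$. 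When $D$ halts, $D'$ outputs whatever $D$ outputs.

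Next I would argue the two cases match the two worlds of Definition~\ref{def:wm_undetectable} exactly. When the oracle is $\mathcal{N}$, each block is an independent $\mathcal{N}(0, I_n)$ sample, so the pre-IDWT latent is distributed as $\gaussian$; because the IDWT is orthonormal it preserves the spherical Gaussian, and therefore $\vec{z} \sim \gaussian$, making the simulated oracle identical to $\mathcal{M}_\theta$. When the oracle is $\mathcal{H}_{\vec{w},\beta,\gamma}$, the same simulation is by construction the definition of $\text{Mark}_\theta(\pi; k)$ under a key $k$ output by $\text{Setup}$ for secret direction $\vec{w}$, so the simulated oracle is exactly $\mathcal{S}_{\theta,k}$. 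Consequently $Adv_{D', \mathcal{N}, \mathcal{H}_{\vec{w},\beta,\gamma}} = Adv_{D, \mathcal{M}_\theta, \mathcal{S}_{\theta,k}}$, and a PPT $D$ achieving non-negligible advantage would give a PPT $D'$ doing the same for hCLWE.

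The main obstacle I expect is ensuring the reduction stays within the polynomial-sample regime that Assumption~\ref{def:hclwe_assumption} requires. Since $D$ is PPT it issues at most $q(\lambda)$ queries, each consuming $m$ samples, for a total of $qm$ hCLWE samples; this is polynomial in $n$ only because the theorem explicitly assumes $m = \text{poly}(n)$ and because the security parameter is chosen so that $n = \text{poly}(\lambda)$. A secondary subtlety worth checking carefully is that applying the IDWT does not leak the secret direction in either world: because it is a fixed orthonormal map applied identically in both branches, any advantage against the post-IDWT distributions transfers back to an advantage against the pre-IDWT hCLWE-vs-Gaussian distributions (one can simply pre-compose $D'$ with the IDWT on its oracle samples), so no information is introduced by this deterministic transformation and the reduction is tight.
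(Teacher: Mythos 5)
Your proposal is correct and follows essentially the same reduction as the paper: simulate $D$'s image oracle by pulling $m$ samples per query from the hCLWE-vs-Gaussian oracle, assembling the latent, applying the IDWT, and running $\text{Sample}_\theta$, so that $D'$ inherits $D$'s advantage. If anything, your write-up is slightly more careful than the paper's, since you explicitly justify that the orthonormal IDWT maps i.i.d.\ Gaussian blocks to a spherical Gaussian latent (so the $\mathcal{N}$ world is exactly $\mathcal{M}_\theta$) and that the total sample consumption $qm$ stays within the polynomial regime required by Assumption~\ref{def:hclwe_assumption}.
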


\begin{proof}

    Consider a PPT distinguisher $D^\mathcal{O}$ that has
    non-negligible advantage $\epsilon$ in distinguishing between
    watermarked and non-watermarked images, that is $Adv_{D,
    \mathcal{M}_{\theta}, \mathcal{W}_{ \theta, k}} \ge \epsilon$.

    We now construct $D'$ that solves the decision hCLWE problem,
    given access to $\mathcal{O}'$, which is either $\mathcal{N}$, an
    oracle that returns samples from the standard normal, or
    $\mathcal{H}_{\vec{w},\beta,\gamma}$, an oracle that gives
    samples from the hCLWE distribution\footnote{The oracle is
    consistent for all queries, that is if the oracle is
    $\mathcal{N}$ then all results are standard normal. The oracle
    does change distributions between queries.}. Distringuisher $D'$
    works as follows:

    \begin{enumerate}
        
        \item On receiving a request to $\mathcal{O}$ with query $\pi$, $D'$:
        
        \begin{enumerate}

            \item requests $m$ samples from the oracle, $\vec{y}_1,
            \ldots, \vec{y}_m \leftarrow \mathcal{O}', \vec{y}_i \in
            \reals^n$, where $mn$ is the size of the latent space of
            the model.
        
            \item Assembles the samples to form the dimensions of the
            latent space, i.e.\ $\vec{z}' \leftarrow (\vec{y}_1,
            \ldots, \vec{y}_m) \in \reals^{mn}$.
            
            \item Applies the inverse DWT to the assembled samples to
            obtain a latent vector, i.e.\ $\vec{z} \leftarrow
            \text{IDWT}(\vec{z}')$.
            
            \item Uses the result as the latent vector to sample an
            image from the model, $\vec{x} \leftarrow
            \text{Sample}_{\theta}(\pi, \vec{z})$.

            \item Forwards the image $\vec{x}$ to $D$.

        \end{enumerate}

        \item $D'$ outputs the same result as $D$.

    \end{enumerate}

    If $D$ has advantage $\epsilon$ in distinguishing
    watermarked and unwatermarked images, then $D'$ has at least
    $\epsilon$ advantage in distinguishing hCLWE samples from
    standard normal samples. Since all of the operations in $D'$ are
    polynomial time, this contradicts the hCLWE assumption.
\end{proof}

%-------------------------------------------------------------------------------
\subsection{Sampling from hCLWE}%
\label{sec:sampling_hclwe}

Sampling from the hCLWE distribution is non-trivial. In~\cite{clwe}
it is suggested that the hCLWE distribution be sampled by
rejection-sampling from the standard normal distribution. However,
this is not practical for our setup, as not only does this require
drawing many more samples than needed, it also assumes the base
distribution is the standard normal and easy to sample. In our case,
the base distribution is the model's latent space after a DWT
transform. Rather than rejection sampling in this space, we propose a
much simpler approach: beginning with samples of the appropriate base
distribution (e.g.\ standard normal and then apply DWT), project the
samples to the nearest CLWE point and then add noise in the secret
direction. We can then reverse any transforms needed to get the
samples in the required domain.

\begin{algorithm}[t]
\caption{Method to convert samples from a base distribution to hCLWE.}%
\label{alg:hclwe_sample}
\KwIn{Secret direction $\vec{w}$,
    Samples $\vec{y}_1, \ldots, \vec{y}_m \in \reals^{n}$}
\KwOut{Samples $\vec{y}'_1, \ldots, \vec{y}'_m \in \reals^{n}$}
$\gamma' \gets \sqrt{\beta^2 + \gamma^2}$\;
\For{$i \in [m]$}{
    $k_i \gets \lceil \gamma' \langle \vec{y}_i, \vec{w} \rangle \rfloor$\;
    $z_i \gets \mathcal{N}(0, \beta)$\;
    $\vec{y}_i' \gets \vec{y}_i + \left( (z_i + k_i \cdot \gamma / \gamma') / \gamma' - \langle \vec{y}_i, \vec{w} \rangle \right) \vec{w}$\;
}
\KwRet{$\vec{y}'_1, \ldots, \vec{y}'_m$}
\end{algorithm}

Algorithm~\ref{alg:hclwe_sample} shows the details of this method, which
is slightly more complicated due to the fact that the gaussian
mixtures have width $\beta / \gamma' = \beta / \sqrt{\beta^2 +
\gamma^2}$ and are separated by $\gamma / {\gamma'}^2$. If $\beta \ll
\gamma$ then $\gamma' \approx \gamma$ and this simplifies to the
expected steps. Indeed, for practical purposes the difference is
insignificant. This can be seen in Figure~\ref{fig:hclwe_pdf}, as the
blue area is a histogram generated by a simulation of $10,000$
samples using this approximation and is not simply the area under
the dashed line (the expected PDF).

\begin{claim}

The method shown in Algorithm~\ref{alg:hclwe_sample}, on input from
the standard normal distribution $\mathcal{N}$, produces samples from
$H_{\vec{w},\beta,\gamma}$.

\end{claim}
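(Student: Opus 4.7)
The plan is to show that the output $\vec{y}'_i$ has the same density as $H_{\vec{w},\beta,\gamma}$ by exploiting the fact that both distributions factor as a product of a standard Gaussian on the hyperplane $\vec{w}^\perp$ and a one-dimensional ``gaussian-pancakes'' density along $\vec{w}$. It therefore suffices to verify each factor separately.

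First I would rewrite the hCLWE marginal along $\vec{w}$ in a transparent form. Setting $t = \langle \vec{y}, \vec{w} \rangle$, decomposing $\|\vec{y}\|^2 = \|\vec{y}_\perp\|^2 + t^2$ in Definition~\ref{def:hclwe_dist}, and completing the square in the exponent $-\pi t^2 - \pi(k - \gamma t)^2 / \beta^2$ using $\gamma'^2 = \beta^2 + \gamma^2$, the marginal density along $\vec{w}$ becomes proportional to $\sum_{k \in \mathbb{Z}} \rho_{\gamma'}(k)\, \rho_{\beta/\gamma'}\!\left(t - k\gamma/\gamma'^2\right)$. That is, the target marginal is a discrete mixture of Gaussians with means $k\gamma/\gamma'^2$, widths $\beta/\gamma'$, and weights $\rho_{\gamma'}(k) = \exp(-\pi k^2 / \gamma'^2)$ --- an identity that explains the otherwise mysterious constants in Algorithm~\ref{alg:hclwe_sample}.

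Next I would verify the perpendicular and parallel parts of the algorithm's output. The update on line~5 adds only a scalar multiple of $\vec{w}$ to $\vec{y}_i$, so $\vec{y}'_{i,\perp} = \vec{y}_{i,\perp}$; since $\vec{y}_i \sim \gaussian$, this perpendicular component is a standard Gaussian on $\vec{w}^\perp$ and independent of $\langle \vec{y}_i, \vec{w} \rangle$, which matches the target factor. Along $\vec{w}$, the algorithm's formula gives $\langle \vec{y}'_i, \vec{w} \rangle = k_i \gamma / \gamma'^2 + z_i / \gamma'$; conditional on $k_i = k$ and using independence of $z_i \sim \mathcal{N}(0,\beta)$, this is a Gaussian centered at $k\gamma/\gamma'^2$ of width $\beta/\gamma'$, exactly the $k$-th bump of the target mixture.

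The main obstacle is showing the mixture weights match. Because $k_i = \lceil \gamma' \langle \vec{y}_i, \vec{w}\rangle \rfloor$, we have $\Pr[k_i = k] = \int_{(k-1/2)/\gamma'}^{(k+1/2)/\gamma'} \rho_1(u)\,du$, which after the substitution $v = \gamma' u$ equals the mass of $\rho_{\gamma'}$ on the unit interval $[k - 1/2,\, k + 1/2]$, whereas the target calls for weights proportional to the pointwise value $\rho_{\gamma'}(k)$. I would close this gap by a standard smoothing-parameter argument: in the CLWE regime $\gamma' = \Omega(\sqrt{n})$ assumed by Theorem~\ref{thm:hclwe}, $\rho_{\gamma'}$ varies by only a negligible multiplicative factor over any unit-length interval, so the two weight vectors are within negligible total variation distance and therefore the algorithm's output is statistically indistinguishable from $H_{\vec{w},\beta,\gamma}$. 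If strict pointwise equality is required, the rounding step can instead be replaced by a direct draw $k_i \leftarrow D_{\mathbb{Z},\gamma'}$ from the discrete Gaussian on the integers, after which the four steps compose to yield the target density exactly.
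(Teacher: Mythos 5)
Your proof follows essentially the same route as the paper's: decompose $\vec{y}$ into its component along $\vec{w}$ and the orthogonal complement, observe that the algorithm leaves the orthogonal part untouched, and match the parallel part to the Gaussian-mixture form $\sum_{k}\rho_{\gamma'}(k)\,\rho_{\beta/\gamma'}\bigl(\tilde{y}'-k\gamma/\gamma'^2\bigr)$, which is related to the density in Definition~\ref{def:hclwe_dist} by the completing-the-square identity (the paper defers that algebra to Appendix~\ref{sec:hclwe_algebra}; you run it in the other direction, rewriting the target as a mixture first, which is a cosmetic difference). The one substantive divergence is to your credit: you notice that the mixture weights produced by the rounding step are $\int_{k-1/2}^{k+1/2}\rho_{\gamma'}(t)\,dt$ rather than the pointwise values $\rho_{\gamma'}(k)$ that the target requires. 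The paper's proof silently writes $\rho_{\gamma'}(k)$ as the probability of selecting index $k$ without justifying that step, so the claim as stated holds only approximately for the algorithm as written (the surrounding prose concedes this --- ``for practical purposes the difference is insignificant'' --- but the proof does not). Your proposed exact fix, drawing $k_i$ from the discrete Gaussian $D_{\mathbb{Z},\gamma'}$ instead of rounding, is the right way to make the claim literally true. One caveat: your assertion that the rounded and discrete Gaussian weights are within \emph{negligible} total variation distance is optimistic; a midpoint-rule estimate gives a relative deviation of order $1/\gamma'^2$, which is inverse-polynomial for $\gamma'=\Theta(\sqrt{n})$ and a non-trivial constant for the concrete $\gamma=2$ used in the evaluation, so the approximation argument yields statistical closeness only in a quantitative, not cryptographically negligible, sense --- another reason to prefer the discrete-Gaussian sampling fix if exactness is wanted.
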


\begin{proof}

    Consider a single input sample $y$ drawn from the standard normal
    $\mathcal{N}$. Let $\gamma' = \sqrt{\beta^2 + \gamma^2}$,
    $\tilde{y} = \langle \vec{y}, \vec{w} \rangle$ and $\vec{y}^\perp
    = \vec{y} - \tilde{y} \cdot \vec{w}$, i.e.\ $\tilde{y}$ is the
    projection of $\vec{y}$ in the secret direction and
    $\vec{y}^\perp$ is the component of $\vec{y}$ perpendicular to
    it. Set $k = \lceil \gamma' \tilde{y} \rfloor$. We can now
    rewrite the output of the algorithm as:
    \[
        \vec{y}' = \vec{y}^\perp +
        \left(\frac{z + k \cdot \gamma / \gamma'}{\gamma'}\right) \vec{w}
        = \vec{y}^\perp + \tilde{y}' \vec{w}
    \]
    
    From this we can write the PDF of $\vec{y}'$ as the convolution of
    the choices of $k$ and $z$:
    \[
        \rho(\vec{y}^\perp) \cdot \sum_{k \in \mathbb{Z}}
        \rho_{\gamma'}(k) \rho_{\beta}(z)
    \]
    By substituting out $z$, we get:
    \[
        \rho(\vec{y}^\perp) \cdot \sum_{k \in \mathbb{Z}}
        \rho_{\gamma'}(k)
        \rho_{\beta / \gamma'}\left(\tilde{y}' - k \gamma / {\gamma'}^2 \right)
    \]

    This expression can be manipulated algebraically to achieve the
    same expression as in Definition~\ref{def:hclwe_dist} (see
    Appendix~\ref{sec:hclwe_algebra}).
\end{proof}
%-------------------------------------------------------------------------------
\section{Evaluation}%
\label{sec:eval}
%-------------------------------------------------------------------------------

Returning to the problem statement in Section~\ref{sec:problem} and
specifically the model shown in Figure~\ref{fig:wm_use_cases}, we
evaluate \tool{} against the following questions:

\begin{enumerate}[start=1,label={\bfseries RQ\arabic*:}, align=left]
    \item What practical CLWE parameters should be used for \tool{}?

    \item How well can we recover the \tool{} watermark after inverting the raw images?

    \item What is the quality of the images generated by \tool{}?
    
    \item Do steganographic attacks work against \tool{}?

    \item How robust is \tool{} to benign transformations, such as JPEG compression, on the watermarked images?   

\end{enumerate}

\sourceref{}

%-------------------------------------------------------------------------------
\subsection{RQ1. Practical Parameters and an Adaptive Covariance Attack}%
\label{sec:practical-parameters}

The CLWE problem is relatively recent candidate for a cryptographic
hardness assumption, and as such, optimal parameters for security are
not fully settled. The original reductions from hard problems to
CLWE~\cite{clwe} require that $\gamma \ge 2 \sqrt{n}$ and a
non-trivial value of $\beta > 0$, using a quantum reduction.
In~\cite{clwe_hardness}, it is further confirmed using a classical
reduction that $\gamma = \tilde{\Omega}(\sqrt{n})$ is required for
theoretical security. Using the stronger assumption of subexponential
hardness of LWE, it is also shown in~\cite{clwe_hardness} that
$\gamma = {\left(\log n\right)}^{\frac{1}{2} + \delta} \log \log n$
should be sufficient, for arbitarily small $\delta > 0$. All of these
are shown to be sufficient \emph{asymptotically}. The question
becomes: what concrete values of $n, \gamma$, and $\beta$ should we
use in practice? In particular, image sampling and inversion
processes add significant noise, so it is diserable to use the
smallest possible $\gamma$ and $\beta$ to ensure that we can recover
the watermark, while maintaining undetectability.

To answer this question, we extended the covariance attack described
in~\cite{clwe} to produce an adaptive attack that can work with fewer
samples. In this attack, the adversary computes the covariance matrix
of a large set of samples, computes the eigenvalues, and finds the
maximum difference with $1/2\pi$, see
Algorithm~\ref{alg:covariance_attack} for an explicit description. In
the original attack, the adversary checks if this score is greater
than $\gamma^2 \exp(-\pi(\beta^2 + \gamma^2))$. While this attack is
successful when large number of samples is available, i.e.\ $m >
O\left(\exp(\gamma^2)\right)$, the threshold given is not useful for
smaller sample sizes. For practical usage, we are interested in
parameters that are secure for adversaries that have access to a more
limited number of samples, i.e.\ $m$ at most polynomial in $\gamma$
or $n$. To that end, we adjust the procedure slightly to compare the
maximum eigenvalues in the covariance matrix produced by a set of
CLWE samples and normally distributed samples and use the Area Under
the Curve (AUC) of the Receiver Operating Characteristic (ROC) as a
measure of the attack's success. This produces an attack that
can be successful with fewer samples, at the expense of possibly more false positives.

\begin{algorithm}[t]
\caption{Covariance method to determine if samples are from a CLWE distribution or standard normal.}%
\label{alg:covariance_attack}
\KwIn{Samples $A \in \reals^{m \times n}$}
\KwOut{Covariance score}
$\Sigma_m \gets \frac{1}{2 \pi m} A^T A$\;
$\mu_1, \ldots, \mu_n \gets \text{eigenvalues}(\Sigma_m)$\;
% \KwRet{$\delta_T = 2(k-1) \varepsilon \bar{g} + k n \delta_g $}\;
\KwRet{$\max_{i \in [n]} \left| \mu_i - \frac{1}{2 \pi} \right|$}\;
% \KwRet{$\max_{i \in [n]} \left| \mu_i - \frac{1}{2 \pi} \right|}\;
\end{algorithm}

For each combination of $n$ (dimensionality), $m$ (number of
samples), and $\gamma$ (gaussian pancake spacing), and fixed $\beta =
0.001$, we produced $100$ iterations using both CLWE samples and
normally distributed samples, and then score both using
Algorithm~\ref{alg:covariance_attack}. We then compute the AUC score,
such that $1.0$ indicates that it is possible to distinguish the two
sets, and $0.5$ indicates that the two sets are indistinguishable. A
score in between indicates the presence of false positives or false
negatives depending on the threshold used. We plot the results in
Figure~\ref{fig:cov_scores}. As expected, with $\gamma = 8$
(theoretically sound for $n \le 64$) the attack is unsuccessful even
with a million samples. At the opposite extreme, with $\gamma = 1$
the attack is successful with several thousand samples.
Interestingly, the attack requires more samples with larger $n$. This
makes sense, as covariance estimation requires linearly more samples
as the number of dimensions increases. If $\gamma$ is increased to
$2$ or $4$, then in each case the number of samples needed for a
successful attack increases an order of magnitude each time.
Also plotted in Figure~\ref{fig:cov_scores} are the
attack accuracies given by the threshold recommended by prior theoretical work~\cite{clwe}. As expected,
for smaller $\gamma$ values the theoretical and practical attacks
converge quickly. For larger $\gamma$ values, the thresholds become
less useful, and were unable to produce any distinguishing power
for any $\gamma$ greater than $1$.

In the CLWE hardness literature~\cite{clwe, clwe_hardness}, the
security parameter is usually defined as the number of dimensions
($n$), with security proven for $\gamma \ge \Omega(\sqrt{n})$,
implying that $\gamma$ should grow with $n$ to maintain security when
the number of samples $m$ is not fixed. However, when the sample size
is fixed, our results show that increasing $n$ while keeping $\gamma$
fixed may actually improve security. In particular, if the total
amount of data ($n \cdot m$) is fixed, then increasing $n$ while
keeping $\gamma$ fixed reduces $m$, which makes the problem harder.

\pgfplotsset{cycle list/Paired-12}
\begin{figure}[!h]
\begin{center}
    \pgfplotstableread[col sep=tab,]{data/cov_attack.tsv}\cov
\begin{tikzpicture}
    \begin{axis}[
        width=0.95*\columnwidth,
        height=6cm,
        legend style={at={(0.5, -0.2)},anchor=north},
        xlabel={Number of Samples ($m$)},
        ylabel={AUC Score},
        xmode=log,
        ymin=0.5, ymax=1.0,
        no markers,
        cycle list/Set1-9,
        mark options={solid},
        thick,
        grid=both,
        grid style={dashed, gray!50},
        legend columns=2, % This sets the legend to two columns
        legend style={
            /tikz/column 2/.style={ % Defines style for the second column
                column sep=5pt, % Sets the separation between columns
            },
        },
        ]

        \addplot table [x={m}, y={n32_g1}]{\cov};
        \addlegendentry{$n=32, \gamma=1$}
        \addplot table [x={m}, y={n64_g1}]{\cov};
        \addlegendentry{$n=64, \gamma=1$}
        \addplot table [x={m}, y={n32_g2}]{\cov};
        \addlegendentry{$n=32, \gamma=2$}
        % \addplot [ color = red ] table [x={m}, y={n64_g2}]{\cov};
        % \addlegendentry{$n=64, \gamma=2$}
        \addplot table [x={m}, y={n32_g4}]{\cov};
        \addlegendentry{$n=32, \gamma=4$}
        % \addplot [ color = green!60!black ] table [x={m}, y={n64_g4}]{\cov};
        % \addlegendentry{$n=64, \gamma=4$}
        \addplot table [x={m}, y={n32_g8}]{\cov};
        \addlegendentry{$n=32, \gamma=8$}
        % \addplot [ color = purple ] table [x={m}, y={n64_g8}]{\cov};
        % \addlegendentry{$n=64, \gamma=8$}

        \addplot [ dashed ] table [x={m}, y={n32_g1_tacc}]{\cov};
        \addlegendentry{$n=32, \gamma=1$ t. acc.}
        \addplot [ dotted ] table [x={m}, y={n64_g1_tacc}]{\cov};
        \addlegendentry{$n=64, \gamma=1$ t. acc.}
    \end{axis}
\end{tikzpicture}
\caption{Covariance attack scores from statistical simulations. Each
combination of $n, m$ and $\gamma$ was simulated $100$ times for both
CLWE and Normal distributions to produce scores using
Algorithm~\ref{alg:covariance_attack}. The resulting ROC AUC is shown
on the vertical axis. A score of 0.5 is equivalent to random
guessing, and 1.0 indicates that it's possible to perfectly
distinguish the two categories. The theoretical attack accuracy ("t.
acc." entries) correspond to the accuracy of using the threshold
value $\gamma^2 \exp(-\pi(\beta^2 + \gamma^2))$ as suggested in the
original work~\cite{clwe}.}%
\label{fig:cov_scores}
\end{center}
\end{figure}

In the context of diffusion model images, the latent space is
typically $64 \times 64 \times 4 = 2^{14}$. Thus, if we divide the
latent space into blocks of size at least $n \geq 32$, this produces
$m \leq 512$ samples per image, well under 1,000 samples. Based on
the results of our test above, we conclude that $\gamma$ as small as
$2$ should be sufficient.

\begin{mdframed}[backgroundcolor=mygray]
    With $n \ge 32$, setting $\gamma \geq 2$ and $\beta \geq 0.001$
    should provide practical indistinguishability of CLWE from
    the normal distribution, provided the number of samples is
    limited to $m \leq 1,000$.
\end{mdframed}

We stress that the above parameters are based on our empirical
evaluations and not on the theoretical reductions of CLWE hardness
given in prior works. Further research on tighter reductions and the
concrete parameters to use in practice is important future work. For
this reason we provide theoretical proofs of security that apply to
any choice of $\gamma$ and $\beta$, but use the parameters from this
section in order to show a proof-of-concept.

%-------------------------------------------------------------------------------
\subsection{Evaluation Setup}%
\label{sec:evalsetup}

We perform our evaluation on state-of-the-art text to image open
sourced diffusion models, specifically, Stable Diffusion version 2.1.
We generate $512 \times 512$ images starting with latent embeddings
of size $4 \times 64 \times 64$ in $50$ steps. We use the inversion
process of~\cite{hong2024exact} for obtaining latent vector estimates
from iamges.

\paragraph{Datasets.} We use two standard benchmark datasets: Stable
Diffusion Prompts (SDP)\footnote{\url{https://huggingface.co/datasets/Gustavosta/Stable-Diffusion-Prompts}}, and
COCO~\cite{lin2014microsoft}. We generate an image without a watermark and for
each watermarking technique using the same prompt and initial seed,
for each of the first $100$ prompts in each dataset.

\paragraph{Baselines.} \treering{} and \gs{} are the two state-of-the-art
pre-processing based watermarking techniques that claim some level of
undetectability unlike the post-processing based methods which are
known to be detectable~\cite{treering}. We use these two methods as our
baselines. We use the ``ring'' pattern for \treering{}, and the
standard \gs{} method, but note that we fixed the key and nonce for all
images. The original code for \gs{} generated a fresh key and nonce for
each image, which is not practical since the recovery process would
not have access to the key and nonce for a specific image. All parameters
for \treering{} and \gs{} match the defaults published in the respective papers.

\paragraph{Evaluation Metrics.} We use Frechet Inception Distance
(FID)~\cite{heusel2017gans} to measure the quality of generated images. FID
measures the difference in the embeddings between two images using
the Inception model, which is a heuristic for how similar they are.
Measuring the FID between the watermarked images and their
unwatermarked counterparts gives a sense of the quality degradation
of the watermarking scheme, i.e.\ a lower FID implies better quality
images.

All of the watermarking schemes produces a score indicative of the
likelihood of the watermark being present in the image. Thus, the
performance of the watermarking scheme depends on choosing a
threshold for this score. This produces a true positive rate (TPR,
corresponds to completeness in Definition~\ref{def:wm_complete}) and
false positive rate (FPR, corresponds to soundness in
Definition~\ref{def:wm_sound}) for a given threshold. By plotting the
TPR vs FPR curve for varying thresholds we obtain the Receiver
Operating Characteristic (ROC) curve, a commonly used method for
evaluating the performance of binary classifiers. The area under the
ROC curve (AUC) is a metric that summarizes the performance of the
classifier. A score of 1 indicates perfect performance, while a score
of 0.5 indicates performance no better than random. Thus we use the
AUC score to evaluate the performance of the watermarking schemes.

\paragraph{System Specification.} We ran our experiments on an AMD
EPYC 7443P 4GHz processor with 96 GB of RAM and 4 Nvidia A40 GPUs
with 45 GB of memory each, running CUDA 12.2 and Nvidia drivers
535.129.03.

\subsection{RQ2. Watermark Recovery}%
\label{sec:eval-soundness-completeness}

We answer how sound and complete \tool{}'s recovery strategy is by
producing a watermarked and unmarked image for each of the $100$
prompts. As discussed in Section~\ref{sec:practical-parameters} we
set $\gamma = 2, \beta = 0.001$ with a block size of $2 \times 4
\times 4$ an apply the watermark in the DWT domain. We then apply
\tool{}'s recovery strategy to all the images and compare the
results. We find that for both datasets, \tool{} is able to
distinguish between watermarked and unmarked images with high
accuracy, with an AUC score greater than 0.99 in both cases. For
comparison, the baseline techniques produced AUC scores of $1.0$ for
the same conditions.

\begin{mdframed}[backgroundcolor=mygray]

    \tool{}'s detector has high soundness and completeness
    (AUC scores $\geq 0.99$) in unperturbed images.

\end{mdframed}

\subsection{RQ3. Image Quality}%
\label{sec:quality}

As can be seen in Figure~\ref{fig:image_fid}, \tool{} generates
images with the closest quality to the non-watermarked images for
both datasets, while \gs{} and \treering{} produces an FID distance
significantly greater (worse) than \tool{}.

The degradation in quality of the images indicates that both the
baselines are not undetectable, especially when many watermarked
images are analyzed at once for detectability. This is expected for
\treering{} since it does not embed provably undetectable watermarks.
It is also expected for \gs{} as it only proves undetectability for a
single image. We explain this in detail in our discussion regarding
\gs{} in Section~\ref{sec:eval-steganographic}.

\begin{mdframed}[backgroundcolor=mygray]

    \tool{} generates images with similar quality to non-watermarked
    images and significantly higher quality images than the
    baselines.

\end{mdframed}

\begin{figure}
\centering
\begin{tikzpicture}
\begin{axis}[
    ybar,
    height=5.25cm,
    x=0.35 \columnwidth,
    bar width=0.09 \columnwidth,
    enlarge x limits={abs=0.2 \columnwidth},
    legend style={at={(0.5,-0.15)},
      anchor=north,legend columns=-1},
    ylabel={FID Score},
    symbolic x coords={COCO, SDP},
    xtick=data,
    nodes near coords,
    nodes near coords align={vertical},
    ymin=0,
    ]
\addplot coordinates {(COCO,144.0364051012072) (SDP,142.79526935232798)};
\addplot coordinates {(COCO,180.99061408122213) (SDP,172.67958443347743)};
\addplot coordinates {(COCO,40.299619303914255) (SDP,42.13922105086317)};
% \addplot coordinates {(Tree Ring,49.6777008171527) (Gaussian Shading,56.42095541579829) (CLUE-Mark,6.251870155334473) };
% \addplot coordinates {(Tree Ring,48.66401487212778) (Gaussian Shading,54.18812882411032) (CLUE-Mark,4.103879928588867) };
% \legend{COCO, SDP}
\legend{Tree Ring,Gaussian Shading,CLUE-Mark}
\end{axis}
\end{tikzpicture}
\caption{Image quality as measured by FID to unwatermarked images on two standard datasets.
Lower numbers indicate higher quality (closer to unwatermarked).}%
\label{fig:image_fid}
\end{figure}

\subsection{RQ4. Resilience to Steganographic Attacks}%
\label{sec:eval-steganographic}

A steganographic attack attempts to remove the watermark from marked
images by averaging the difference between the watermarked and
unwatermarked images over many samples. In order to compare with
prior work and confirm the theory behind our construction, we
reproduced the results from~\cite{steg_attack} using $100$ pairs of
watermarked and unwatermarked images generated for each watermarking
method, and average the difference between the images. As can be seen
in the examples in Figure~\ref{fig:steg_images}, there are obvious
patterns in the averages from \treering{} and \gs{}, but \tool{} produces a
nearly blank image. We then subtract this average from the
watermarked images and attempt to recover the watermark. As shown in
Figure~\ref{fig:steg_auc}, the attack successfully removes the \treering{}
and \gs{} watermarks, but is unable to remove the \tool{} watermark. We
note that the resulting watermark-removed images for \treering{} are of
similar quality, but for \gs{} suffer some degradation. We expect that
a more sophisticated attack could remove the \gs{} watermark without
degrading the image.

\begin{mdframed}[backgroundcolor=mygray]

    \tool{} is robust to standard steganographic attacks unlike the
    baselines whose watermark can be detected and removed.

\end{mdframed}

\begin{figure}
\centering
\begin{tikzpicture}
\begin{axis}[
    ybar,
    height=5.25cm,
    x=0.35 \columnwidth,
    bar width=0.09 \columnwidth,
    enlarge x limits={abs=0.2 \columnwidth},
    legend style={at={(0.5,-0.15)},
      anchor=north,legend columns=-1},
    ylabel={AUC},
    symbolic x coords={COCO, SDP},
    xtick=data,
    nodes near coords,
    nodes near coords align={vertical},
    ]
\addplot coordinates {(COCO,0.508) (SDP,0.427)};
\addplot coordinates {(COCO,0.510) (SDP,0.581)};
\addplot coordinates {(COCO,0.977) (SDP,0.980)};
\legend{Tree Ring,Gaussian Shading,CLUE-Mark}
\end{axis}
\end{tikzpicture}
\caption{AUC scores after steganographic attack.
Values closer to 1 indicate presence of the watermark after the attack,
while closer to 0.5 indicate removal of the watermark.}%
\label{fig:steg_auc}
\end{figure}

\begin{figure*}
\centering
\fbox{\includegraphics[width=0.78\textwidth,trim={0 0 15cm 0},clip]{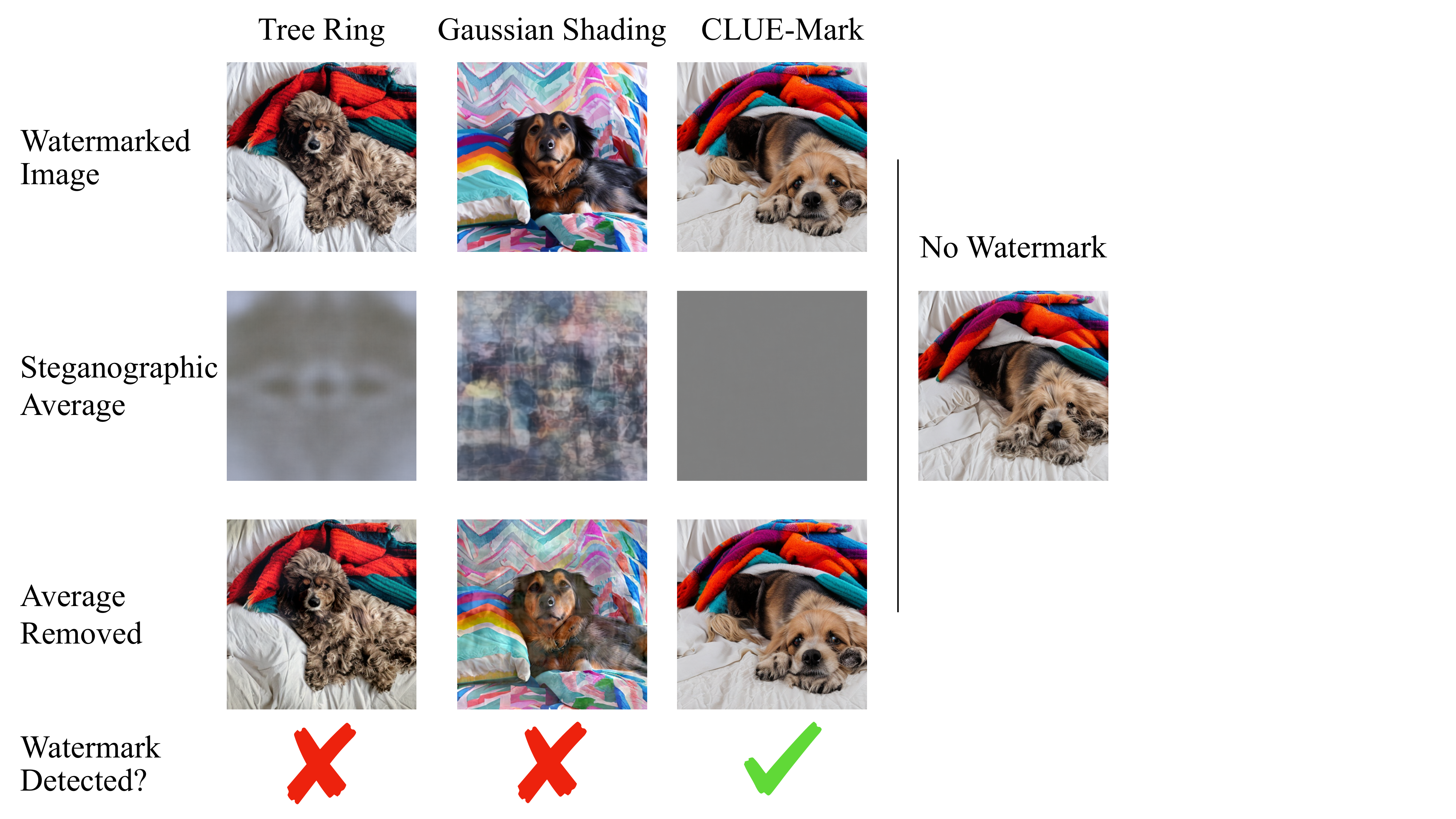}}
\caption{Steganographic attack examples against related schemes. All images generated using
the prompt ``a dog laying next to some colorful blankets on a white
bed'' with the same random seed. Gaussian shading is evaluated with a fixed nonce.}%
\label{fig:steg_images}
\end{figure*}

\subsection{RQ5. Robustness to Benign Transformations}%
\label{sec:eval-robustness}

We evaluate the robustness of \tool{} and the baseline techniques
under different non-adversarial post-processing techniques.
Specifically, we apply JPEG compression (quality value $0.95$),
brightness (level 1), and random cropping (0.9 in each direction),
and rotations to the watermarked images and measure recovery
performance. Our results are shown in Figure~\ref{fig:robust}. For
comparison, the baselines all produced AUC scores at or near 1.0 for
all conditions. We observe that \tool{} is somewhat robust to JPEG
compression and brightness changes, but not to the other
perturbations. Thus \tool{} is robust to minor perturbation, but any
significant modifications make the watermark unrecoverable.

\begin{mdframed}[backgroundcolor=mygray]

    \tool{} is robust to minor perturbations such as JPEG compression
    and brightness but not robust to larger perturbations.

\end{mdframed}

\begin{figure}
\centering
\pgfplotstableread[col sep=tab,]{figures/robust.tsv}\robust
\begin{tikzpicture}
\begin{axis}[
    width=0.9*\columnwidth,
    height=6cm,
    ybar,
    bar width=0.08 \columnwidth,
    enlargelimits=0.3,
    legend style={at={(0.5,-0.15)},
      anchor=north,legend columns=2,},
    ylabel={AUC Score},
    % symbolic x coords from table={\robust}{1},
    xtick=data,
    xticklabels from table={\robust}{Label},
    nodes near coords,
    nodes near coords align={vertical},
    ]

    \addplot table[x expr=\coordindex,y=COCO]{\robust};
    \addlegendentry{COCO}
    \addplot table[x expr=\coordindex,y=SDP]{\robust};
    \addlegendentry{SDP}
\end{axis}
\end{tikzpicture}
\caption{Robustness to common perturbations as measured by RoC AUC for \tool{}.
Values closer to 1.0 indicate watermark presence after perturbations, while
closer to 0.5 indicates removal.}%
\label{fig:robust}
\end{figure}

\section{Related Work}%
\label{sec:related}

Watermarking images has been extensively studied since the
1990s~\cite{van1994digital}. Early research primarily focused on
watermarking individual images, while more recent advancements target
AI-generated images. Unlike traditional images, AI-generated content
allows watermarking at various stages of the generation pipeline,
including altering training data or fine-tuning models to embed
watermarks.

\paragraph{Individual Images.} Post-processing
techniques are commonly applied to individual images, embedding
watermarks by modifying images after creation. Classical methods
operate in either the spatial~\cite{van1994digital, tsai2000joint} or
frequency domain~\cite{al2007combined, guo2002digital,
hamidi2018hybrid, kundur1997robust, lee2007reversible}. Spatial
domain techniques modify pixel intensities, often by altering the
least significant bits of selected pixels. While efficient, these
techniques lack robustness against simple transformations like
brightness adjustments or JPEG compression. In contrast, frequency
domain methods embed watermarks by altering an image's frequency
coefficients after applying a transformation such as the Discrete
Cosine Transform (DCT) or Discrete Wavelet Transform (DWT). These
approaches offer greater robustness to perturbations, while ensuring
that the watermark remains imperceptible, often measured by peak
signal-to-noise ratio (PSNR).

More recently, many deep learning-based techniques have been
introduced in order to enhance robustness and
quality~\cite{jia2021mbrs, tancik2020stegastamp, zhu2018hidden,
liu2019novel, ahmadi2020redmark}. These techniques typically leverage
encoder-decoder neural architectures, where the encoder embeds a
message into an image, and the decoder recovers the message. The
models are trained to minimize differences between the watermarked
and original images for even an adversarial classifier, while
ensuring message recovery even in the presence of noise. Various
architectures, noise injection strategies, and training objectives
have been explored to achieve robustness against diverse
transformations.

However, existing post-processing methods lack provable guarantees of
undetectability and may degrade image quality~\cite{treering,
fernandez2023stable}. Additionally, using the same watermark across
multiple images can expose vulnerabilities to steganographic attacks,
allowing the watermark to be extracted~\cite{steg_attack}.

\paragraph{Generative Models.} Generative models like Generative
Adversarial Networks (GANs) and Latent Diffusion Models have become
standard for producing high-fidelity, ultra-realistic images. When
traditional watermarking methods are applied to AI-generated images,
they often reduce visual quality~\cite{fernandez2023stable}, which is
undesirable.

Recent research addresses this issue by embedding watermarks directly
within the generative process~\cite{yu2020responsible,
yu2021artificial, fernandez2023stable, treering, gaussian_shading,
lukas2023ptw}. One approach involves training parts of the model to
embed messages during generation, allowing extraction via a
pre-trained decoder~\cite{yu2020responsible, fernandez2023stable,
yu2021artificial, lukas2023ptw}. However, these techniques require
multiple training rounds, which is computationally expensive, especially
as the size of state-of-the-art models continues to grow. While
these methods claim improved image quality over post-processing
techniques, they rely on ad-hoc metrics like PSNR and are
significantly more resource-intensive.

\paragraph{Watermarking Diffusion Models.} Recent advances in watermarking latent diffusion model images propose
modifying only the input latent vectors, offering computational
efficiency without altering model parameters~\cite{treering,
gaussian_shading}. Diffusion models then generate images normally,
unaware of the modification to the latents. The watermarks are
detected by reversing the diffusion process and examining latent
vector. Techniques such as Tree Ring~\cite{treering} and Gaussian
Shading~\cite{gaussian_shading} claim state-of-the-art performance in
both visual quality, in the sense that they are imperceptible to
human observers, and robustness, in the sense that after common image
perturbations the watermark can still be recovered. However, none of
these techniques provide provable undetectability; \treering provided
no proof of undetectability only that it was claimed to be
``imperceptible'', while the proof of \gs undetectability requires a
per-image nonce that must be available at verification time, an
assumption that is difficult to realize in practice. Moreover, both watermarks
can still be identified, removed, or forged by steganographic attacks~\cite{steg_attack}.

In work concurrent to ours, \cite{prc_watermarks} proposes using
Pseudo-Random Codes (PRC) to bias the noise input to a diffusion
model, similar to \gs but without requiring a nonce. As the name
suggests, PRCs are a new cryptographic primitive that can be
considered a combination of encryption and error-correcting codes.
Cyphertexts are generated by encoding a message with a secret key,
which can then be decoded even if some subset of the bits are
flipped. Without the secret key the cyphertext appears random to all
computationally-bounded observers. Intuitively, the signs of the
latent noise are chosen to match a PRC encoding of a 0-length
message. By the pseudorandom nature of the PRC, this is equivalent to
choosing the signs randomly, and therefore does not affect the model
output. Similar to \treering and \gs, the model is inverted to obtain
an estimate of the original noise vector, from which the sign values
are extracted and checked using the PRC decoder. If the decoder
outputs a valid message, the image is considered watermarked. In a
sense, one could see \tool{} as a continuous-domain
PRC using CLWE, in that with the key the presence of the code can
still be detected despite noise, while without the
key the samples are indistinguishable from a Gaussian.
Interestingly, in the appendix of~\cite{prc_watermarks}, the authors
describe that in order to achieve robustness to perturbations they
had to use parameters that are not theoretically secure. The parameters picked in \tool{} also do not match those of in prior theoretical reductions, but for a completely different reason: We are building on the security of CLWE against a sample-constrained adversary, which has not been theoretically studied thus far to our knowledge.

Overall, our work advances these approaches by introducing a watermarking scheme that is based on CLWE, which is an alternative cryptographic construct. We aim for computational undetectability is our main goal, while addressing robustness against non-adversarial transformation (such as image compression) that is common when transmitting images over the web or social networks.
Our tool,
\tool{}, achieves good image quality in latent diffusion model
outputs yet watermarks cannot be identified or forged without the
secret key in our empirical evaluation. %Similar to PRC watermarks, we select parameters that we feel provide practical security, but are not proven to be theoretically secure. Uniquely, \tool{} leverages cryptographically
%hard lattice problems, specifically the CLWE (Continuous Learning with Errors) problem, to embed watermarks. 
Ours is a novel
application of CLWE in the context of generative models, as prior research
using CLWE focused on implanting
undetectable backdoors in random Fourier feature-based classification models~\cite{goldwasser2022planting}.

%-------------------------------------------------------------------------------
\section{Conclusions and Future Work}%
\label{sec:conclusions}
%-------------------------------------------------------------------------------

In this work we introduced \tool{}, a provably undetectable
watermarking scheme for diffusion model generated images, leveraging
the cryptographic hardness of the Continuous Learning With Errors
(CLWE) problem. We first defined formally what it means for a
watermark scheme to be undetectable, and then described how \tool{}
works. We then proved that \tool{} is undetectable under the hCLWE
assumption with limited samples. Finally, we evaluated empirically the hCLWE parameter
choices against common attacks, and the performance of \tool{} on
common diffusion models and datasets, showing that it is usable in
practice, produces images with much higher fidelity to the original
unwatermarked images than existing techniques, and cannot be detected
by steganographic techniques.\@ \tool{} is robust to some common
perturbations, such as JPEG compression and brightness changes. The parameters we chose provide a practical security
level.
%but are not theoretically secure
%similar to the PRC watermarks of Gunn et al~\cite{prc_watermarks}.

While \tool{} presents the first demonstration of how CLWE-based watermarks are already usable for diffusion models, it can be improved. Future work can consider improving the robustness of \tool{} to
perturbations. The concrete parameters for CLWE are not yet
well studied, and it is possible that better parameters could improve
the performance of \tool{} while still maintaining undetectability. Finally,
the technique used for inverting the model to obtain the estimate of
the original latent vector is the current bottleneck in recovering
the watermark in terms of both quality and speed. Improvements to the
inversion method could also significantly improve the performance of
\tool{}.

% \section*{Acknowledgments}
% We are thankful to thank Ivica Nikolic, Wenjie Qu, and Rohit
% Chatterjee for their helpful feedback on previous drafts. This
% research is supported by the research funds of the Crystal Centre at
% National University of Singapore, Cisco University Research Program
% Fund, a corporate advised fund of Silicon Valley Community
% Foundation. All opinions expressed in the work are those of the
% authors.

% trigger a \newpage just before the given reference
% number - used to balance the columns on the last page
% adjust value as needed - may need to be readjusted if
% the document is modified later
%\IEEEtriggeratref{8}
% The "triggered" command can be changed if desired:
%\IEEEtriggercmd{\enlargethispage{-5in}}

\bibliographystyle{ACM-Reference-Format}
\bibliography{paper}

\appendix
%-------------------------------------------------------------------------------
\section{Are Perfect Watermarks possible?}%
\label{sec:robust}

Ideally, a watermark should be robust in the sense that perturbations that do not affect the image quality should not remove the watermark.
Unfortunately, it is impossible for a watermark to simultaneously satisfy completeness, soundness, provable undetectability, and
perfect robustness. We formalize the statement and provide a proof.

\paragraph{Undetectability.} Recall that in
Definition~\ref{def:wm_undetectable} we defined undetectability as
the property that no PPT algorithm can distinguish between a
watermarked image and its unwatermarked counterpart, without access
to the watermarking key $k$.
In other words, the {\em quality} of the 
distribution of the images before and after watermarking does not change, from the perspective of any {\em quality oracle that does not have access to keys and is computationally bounded}.
Lemma~\ref{lemma:imposs-1} bounds the distance between distributions observed under two keys $k,k'$ against any such quality oracle. 
Informally, this tells us that the quality
of images generated from watermarking, even with different keys, is
the same to any downstream function without the watermarking keys.

\begin{lemma}
\label{lemma:imposs-1}

For an undetectabile watermarking scheme, no PPT distinguisher $D$
can distinguish between $\mathcal{W}_{ \theta, k}$ and $\mathcal{W}_{
\theta, k'}$ with non-negligible advantage. That is, $Adv_{D,
\mathcal{W}_{\theta, k}, \mathcal{W}_{ \theta, k}} \le
negl(\lambda)$, assuming $D$ has no access to the keys $k$ or $k'$.

\end{lemma}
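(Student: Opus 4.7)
The plan is a short hybrid argument that bridges $\mathcal{W}_{\theta, k}$ and $\mathcal{W}_{\theta, k'}$ through the unwatermarked oracle $\mathcal{M}_{\theta}$, applying Definition~\ref{def:wm_undetectable} at each step and combining via the triangle inequality on distinguishing advantage.

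First, I would assume for contradiction that there is a PPT $D$ achieving advantage at least $\epsilon(\lambda)$ in distinguishing $\mathcal{W}_{\theta, k}$ from $\mathcal{W}_{\theta, k'}$ for some non-negligible function $\epsilon$, where $k, k' \leftarrow \text{Setup}_\theta(1^\lambda)$ are sampled independently and are not revealed to $D$. Expanding the advantage as
\[
    \bigl|\Pr[D^{\mathcal{W}_{\theta, k}}(1^\lambda)=1] - \Pr[D^{\mathcal{W}_{\theta, k'}}(1^\lambda)=1]\bigr|
\]
and adding and subtracting $\Pr[D^{\mathcal{M}_{\theta}}(1^\lambda)=1]$ inside the absolute value yields
\[
    Adv_{D, \mathcal{W}_{\theta, k}, \mathcal{W}_{\theta, k'}} \le
    Adv_{D, \mathcal{W}_{\theta, k}, \mathcal{M}_{\theta}} +
    Adv_{D, \mathcal{M}_{\theta}, \mathcal{W}_{\theta, k'}}
\]
by the triangle inequality, so at least one of the two right-hand summands must be at least $\epsilon(\lambda)/2$.

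Second, I would directly invoke Definition~\ref{def:wm_undetectable} on whichever summand dominates. Both summands are exactly of the form bounded by the undetectability property, since $\mathcal{W}_{\theta, k}$ and $\mathcal{W}_{\theta, k'}$ each play the role of $\mathcal{S}_{\theta, k}$ with a key freshly sampled from $\text{Setup}_\theta(1^\lambda)$, and $D$ is PPT without access to the keys. Hence both advantages are negligible in $\lambda$, contradicting the $\epsilon(\lambda)/2$ lower bound. The ``reduction'' is trivial in that the same $D$ is reused unchanged in each hybrid step, so the PPT property is automatically preserved.

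The only delicate point, rather than a genuine obstacle, is bookkeeping about which randomness the advantages are taken over: Definition~\ref{def:wm_undetectable} implicitly averages over the random coins of $\text{Setup}$ that produce the key, and the lemma statement must be read as averaging over the independent samples $k$ and $k'$ in the same way. Once this is pinned down, no additional assumption beyond undetectability itself enters the argument, and the proof is complete.
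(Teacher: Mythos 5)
Your proof is correct and follows essentially the same route as the paper's: both insert the unwatermarked oracle $\mathcal{M}_{\theta}$ as the intermediate hybrid, apply Definition~\ref{def:wm_undetectable} to each of the two resulting gaps, and combine them via the triangle inequality. Phrasing it as a contradiction with an $\epsilon(\lambda)/2$ split rather than a direct $2\cdot negl(\lambda)$ bound is a cosmetic difference only.
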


\begin{proof}

The proof is by a standard hybrid argument. Consider any PPT
distinguisher $D$. From the definition of undetectability, we have
that $Adv_{D, \mathcal{M}_{\theta}, \mathcal{W}_{ \theta, k}} \le negl(\lambda)$ and 
$Adv_{D, \mathcal{M}_{\theta} , \mathcal{W}_{ \theta, k'}} \le
negl(\lambda)$. By the triangle inequality:

\begin{align*}
    Adv_{D, \mathcal{W}_{\theta, k}, \mathcal{W}_{ \theta, k'}}
    &\le Adv_{D, \mathcal{M}_{\theta}, \mathcal{W}_{ \theta, k}}
    + Adv_{D, \mathcal{M}_{\theta}, \mathcal{W}_{ \theta, k'}} \\
    &\le 2 \cdot negl(\lambda) = negl(\lambda).
\end{align*}

\end{proof}

\paragraph{Perfect Robustness.} If we allow arbitrary perturbations
of the image, it is trivial to remove the watermark by simply
replacing the image with another completely different image. What may
not be immediately apparent is that even if we restrict to
perturbations that maintain image quality, it is still impossible to
have a watermark that is robust to such perturbations and is
undetectible. Here we define quality in the same way as undetectability:
the quality of two sets of images is the same if no PPT algorithm can
distinguish the distributions with non-negligible advantage.

A watermarking scheme is perfectly robust if a watermarked image,
after perturbations that do not change its quality, is treated the
same as the unperturbed image by the (keyed) verifier, i.e., the
$Extract$ function accepts the perturbed image if the watermarked
image before perturbation was accepted. More formally, a {\em
perfectly robust watermark} is one wherein given an $x \leftarrow
\text{Mark}(k,\pi)$, no PPT adversary can find an image $x'$ of the
same quality as $x$ but for which $\text{Extract}(k, x) \neq
\text{Extract}(k, x')$.

\begin{theorem}
\label{thm:impossible}
No watermarking scheme satisfying soundness, completeness, undetectability,
and perfect robustness exists.
\end{theorem}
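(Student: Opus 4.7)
The plan is to derive a contradiction by exhibiting a PPT adversary that, given a watermarked image $x \leftarrow \text{Mark}_\theta(\pi; k)$, outputs an image $x'$ of the same quality as $x$ but on which the verifier disagrees, i.e., $\text{Extract}_\theta(x'; k) \neq \text{Extract}_\theta(x; k)$. If such an adversary exists, perfect robustness fails, so the four properties cannot simultaneously hold.

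First I would specify the adversary $A$ in the simplest possible way: on input $x$ (with access to the prompt $\pi$), draw a fresh $\vec{z}' \leftarrow \gaussian$ and output $x' \leftarrow \text{Sample}_\theta(\pi; \vec{z}')$. This is a single invocation of the unwatermarked sampling oracle $\mathcal{M}_\theta$, so $A$ is clearly PPT and its output distribution is exactly that of $\mathcal{M}_\theta$ on prompt $\pi$. The adversary essentially ignores its input and produces a clean, unmarked regeneration.

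Next I would verify the two requirements of the robustness game. For Extract disagreement: completeness gives $\text{Extract}_\theta(x; k) = 1$ with probability at least $1-\delta$, and soundness gives $\text{Extract}_\theta(x'; k) = 0$ with probability at least $1-\delta$ because $x'$ is distributed as a fresh draw from $\mathcal{M}_\theta$. A union bound yields $\text{Extract}_\theta(x; k) \neq \text{Extract}_\theta(x'; k)$ with probability at least $1-2\delta$, which is non-negligible for any scheme with meaningful completeness and soundness. For equal quality: the distributions of $x$ and $x'$ are precisely the outputs of $\mathcal{S}_{\theta,k}$ and $\mathcal{M}_\theta$, which are PPT-indistinguishable by undetectability (Definition~\ref{def:wm_undetectable}); this is exactly the appendix's distributional definition of same quality. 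Hence $A$ wins the perfect-robustness game.

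The main obstacle is purely definitional rather than technical: one must be careful that "same quality" is interpreted via PPT-indistinguishability of distributions, as the appendix explicitly stipulates, and not via a pointwise semantic notion that a single sample could not witness. Once that reading is fixed, the contradiction is essentially immediate: undetectability forces a watermarked image to be distributionally interchangeable with a freshly-generated unmarked one, while soundness and completeness together force Extract to separate those two cases. A trivial "resample" adversary therefore suffices, and no strengthening of the construction can avoid it --- some property must be relaxed, which is precisely why \tool{} trades perfect robustness for robustness only under benign transformations.
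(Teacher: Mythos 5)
Your proof is correct, but it takes a genuinely different route from the paper's. The paper's adversary re-watermarks with a fresh key, setting $x' \leftarrow \text{Mark}_\theta(\pi; k')$ for $k' \leftarrow \text{Setup}_\theta(1^\lambda)$; since soundness (Definition~\ref{def:wm_sound}) only bounds the acceptance probability of \emph{unwatermarked} samples, the paper cannot say a priori whether $\text{Extract}_\theta(x'; k)$ accepts, and so it needs Lemma~\ref{lemma:imposs-1} (a hybrid/triangle-inequality step through $\mathcal{M}_\theta$) to establish equal quality, followed by a case analysis: if $x'$ is accepted, soundness is violated in spirit (a forgery under $k$ without knowing $k$); if rejected, perfect robustness is violated. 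Your adversary instead resamples from the unwatermarked model, so soundness applies \emph{directly} to give $\Pr[\text{Extract}_\theta(x'; k) = 1] \le \delta$, and undetectability applies directly (a one-query distinguisher is a special case of the oracle distinguisher in Definition~\ref{def:wm_undetectable}) to give equal quality --- no auxiliary lemma and no case split. Your argument is therefore more elementary and lands the contradiction squarely on perfect robustness; what the paper's version buys is the additional observation that a differently-keyed watermark is itself either a forgery or a quality-preserving remover, which is the more informative statement about real attack surfaces (and is why the paper cites concrete re-watermarking attacks). The one point you rightly flag and handle is the definitional one: ``same quality'' must be read distributionally, exactly as the appendix stipulates, since a pointwise notion would make your ignore-the-input adversary inadmissible; under that reading your proof is complete.
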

\begin{proof}
    
Assume for the sake of contradiction that a watermarking scheme satisfying
all four properties exists. We can instantiate the scheme
with a key $k\leftarrow Setup(1^\lambda)$ and by computing $x
\leftarrow Mark(k,\pi)$ for an input $\pi$. By the {\em completeness}
property, the $Extract(k, x)$ must accept. Now consider the
following generic attack strategy: The adversary who does not know
$k$ generates a fresh key $k'\leftarrow Setup(1^\lambda)$ and
generates $x' \leftarrow Mark(k',\pi)$. Note that by
Lemma~\ref{lemma:imposs-1}, PPT algorithms without access to $k$ and
$k'$ cannot distinguish the distributions of $x$ and $x'$ with
non-negligible probability. Thus, $x$ and $x'$ have the {same
quality}. We analyze the behavior of the original verifier, i.e., the
distribution of the random variable $Extract(k, x')$: \begin{itemize}
\item If the verifier $Extract(k,x')$ accepts with probability more
than $negl(\lambda)$, then it violates {\em soundness}. This is
because $k' \neq k$ with high probability $1 - negl(\lambda)$, and
with that probability, we have a PPT adversary who does not know $k$
but has found $x'$ that is accepted by the verifier with the key $k$.

\item If the verifier $Extract(k,x')$ rejects with probability more than $negl(\lambda)$, then the scheme violates {\em perfect robustness}---the adversary has efficiently
\footnote{The adversary breaks the hardness criterion while sticking to the constraint stated for perfect robustness, i.e., knowing $k'$ but not $k$.}
found $x'$ of same quality
\footnote{Note that quality is determined by oracles with no access to keys.}
as $x$ (by Lemma~\ref{lemma:imposs-1}) without knowing $k$, and yet we have $Extract(k,x') \neq Extract(k,x)$. 
\end{itemize}
We have derived a contradiction to the assumptions in both cases above, which are exhaustive and mutually exclusive. Thus, no watermarking scheme satisfies all 4  properties.
\end{proof}

The attack strategy underlying our proof of Theorem~\ref{thm:impossible} is realizable within practical threat models, such as those assumed in recent attacks on LLM watermarking schemes~\cite{freelunchllmwatermarking}.
%-------------------------------------------------------------------------------
\section{HCLWE Algebra}%
\label{sec:hclwe_algebra}
%-------------------------------------------------------------------------------

First we notice the following equivalence:
\begin{align*}
    &\left(\frac{k}{\gamma'}\right)^2
        + \left(\frac{\tilde{y}' - k \gamma / {\gamma'}^2}{\beta / \gamma'}\right)^2
    \\
    &= \frac{k^2}{\beta^2 + \gamma^2} + \frac{\beta^2 + \gamma^2}{\beta^2}
      \left(\left(\tilde{y}'\right)^2 - \frac{2 \gamma \tilde{y}' k}{\beta^2 + \gamma^2}
            + k^2 \frac{\gamma^2}{\left(\beta^2 + \gamma^2\right)^2}\right)
    \\
    &= \left(\tilde{y}'\right)^2 + \frac{\left(\gamma \tilde{y}'\right)^2}{\beta^2}
        - \frac{2 \left(\gamma \tilde{y}'\right) k}{\beta^2}
        + \frac{k^2}{\beta^2}
    \\
    &= \left(\tilde{y}'\right)^2 + \left(\frac{\gamma \tilde{y}' - k}{\beta}\right)^2
\end{align*}

Where in the second step the last term comes from noticing that:
\begin{align*}
    \frac{k^2}{\beta^2 + \gamma^2} + \frac{\gamma^2}{\beta^2} \frac{k^2}{\beta^2 + \gamma^2}
    = \frac{k^2}{\beta^2 + \gamma^2} \left(1 + \frac{\gamma^2}{\beta^2} \right)
    = \frac{k^2}{\beta^2}
\end{align*}

We can now rearrange the statement in the claim as:
\begin{align*}
    &\rho(\vec{y}^\perp) \cdot \sum_{k \in \mathbb{Z}}
        \rho_{\gamma'}(k)
        \rho_{\beta / \gamma'}\left(\tilde{y}' - k \gamma / {\gamma'}^2 \right)
    \\
        &= \exp\left(-\pi \| \vec{y}^\perp \|^2 \right)
            \cdot \sum_{k \in \mathbb{Z}}
            \exp\left(-\pi \left(\frac{k}{\gamma'}\right)^2 \right)
            \exp\left(-\pi \left(\frac{\tilde{y}' - k \frac{\gamma}{{\gamma'}^2}}{\beta / \gamma'}\right)^2 \right)
    \\
        &= \sum_{k \in \mathbb{Z}} \exp\left[-\pi\left(
            \| \vec{y}^\perp \|^2
            + \left(\frac{k}{\gamma'}\right)^2
            + \left(\frac{\tilde{y}' - k \frac{\gamma}{{\gamma'}^2}}{\beta / \gamma'}\right)^2
        \right)\right]
    \\
        &= \sum_{k \in \mathbb{Z}} \exp\left[-\pi\left(
            \| \vec{y}^\perp \|^2
            + \left(\tilde{y}'\right)^2 + \left(\frac{\gamma \tilde{y}' - k}{\beta}\right)^2
        \right)\right]
    \\
        &= \rho(\vec{y}') \sum_{k \in \mathbb{Z}}
            \rho_\beta (k - \gamma \langle \vec{w}, \vec{y}' \rangle)
\end{align*}

Where in the last step we used the orthogonality of $\vec{y}^\perp$
and $\vec{w}$ to write $\|\vec{y}'\|^2 = \|\vec{y}^\perp\|^2 +
\left(\tilde{y}'\right)^2$, and the definition of $\tilde{y}'$.

\end{document}